\newtheorem{thm}{Theorem}[section]
\newtheorem{lem}[thm]{Lemma}
\newtheorem{pro}[thm]{Proposition}
\theoremstyle{definition}
\newtheorem{defi}[thm]{Definition}
\newtheorem{exa}[thm]{Example}
\newcommand{\Nat}{\mathbb{N}}
\newcommand{\Real}{\mathbb{R}}
\newcommand{\Bool}{\mathbf{B}}
\newcommand{\Zero}{\mathbf{0}}
\newcommand{\Arb}{\mathbb{A}}
\newcommand{\False}{\mathbf{F}}
\newcommand{\True}{\mathbf{T}}
\newcommand{\Obser}{\mathcal{O}}
\newcommand{\basrel}{\sqsubseteq_{\Obser}}
\newcommand{\denote}[1]{\llbracket #1 \rrbracket}
\newcommand{\Trees}[1]{T #1}
\newcommand{\Power}[1]{\mathcal{P}(#1)}
\newcommand{\EfOp}{\mathsf{op}}
\newcommand{\Arr}[1]{\mathsf{ar}(#1)}
\newcommand{\leaf}[1]{\langle #1 \rangle}
\newcommand{\IEQ}{\mathcal{I}}
\newcommand{\IU}{\mathcal{B}}
\newcommand{\Ias}[2]{{#1  \sqsubseteq_{\IEQ}  #2}}
\newcommand{\Axioms}{\mathsf{A}}
\newcommand{\fimp}{\trianglelefteq}
\newcommand{\fpmi}{\trianglerighteq}
\newcommand{\Arel}{\sqsubseteq_{\alpha}}
\newcommand{\clos}[1]{[#1]}
\newcommand{\rell}[1]{\Trees{\langle #1 \rangle}}
\newcommand{\ENor}{\mathsf{nond}}
\newcommand{\EPor}{\mathsf{prob}}
\newcommand{\EsNor}{\mathsf{no}}
\newcommand{\EsPor}{\mathsf{pr}}
\newcommand{\ELo}{\mathsf{lookup}}
\newcommand{\EUp}[1]{\mathsf{update}_{#1}}
\newcommand{\EEx}[1]{\mathsf{exept}_{#1}}
\newcommand{\ECa}[1]{\mathsf{catch}_{#1}}
\newcommand{\ETi}{\mathsf{tick}}
\newcommand{\EIn}{\mathsf{input}}
\newcommand{\ClIU}{\clos{\IU}}
\newcommand{\Exc}{\mathsf{Exc}}
\title{From Equations to Distinctions:\\ Two Interpretations of Effectful Computations}
\author{Niels Voorneveld\thanks{This research was supported by the ESF funded Estonian IT Academy research measure (project 2014-2020.4.05.19-0001).}
\email{niels.voorneveld@gmail.com}
\institute{Tallinn University of Technology}}
\begin{document}
	\maketitle
	
	\begin{abstract}
		There are several ways to define program equivalence for functional programs with algebraic effects.
		We consider two complementing ways to specify behavioural equivalence.
		One way is to specify a set of axiomatic equations, and allow proof methods to show that two programs are equivalent.
		Another way is to specify an Eilenberg-Moore algebra, which generate tests that could distinguish programs.
		These two methods are said to complement each other if any two programs can be shown to be equivalent if and only if there is no test to distinguish them.
		
		In this paper, we study a generic method to formulate from a set of axiomatic equations an Eilenberg-Moore algebra which complements it.
		We will look at an additional condition which must be satisfied for this to work. We then apply this method to a handful of examples of effects, including probability and global store, and show they coincide with the usual algebras from the literature.
		We will moreover study whether or not it is possible to specify a set of unary Boolean modalities which could function as distinction-tests complementing the equational theory.
	\end{abstract}
	
	\section{Introduction}

Program equivalence is an active field of study, allowing us to formulate when two different programs can be used interchangeably.
This can be done in two ways. One can axiomatise which programs should be considered equivalent, and derive a notion of program equivalence from those axioms. Alternatively, one can formulate theoretical tests on programs, which check whether the program satisfies a certain behavioural property. Two programs are then considered equivalent if they satisfy the same properties.

We consider such methods in the presence of functional languages with \emph{algebraic effects} in the sense of \cite{effect,alg_eff,Plotkin:2004}. Effects describe interactions a program has with the outside world. 
Because of possibly varying inputs from the outside world, the same program may produce different results at different executions.
This happens, for instance, if the program requests a random number, or reads off information from a global store location. 
Different possible continuations of a program can be combined using \emph{algebraic operators}.
E.g. we get program terms like $\EPor(P,Q)$, which probabilistically chooses fairly between executing program term $P$ and program term $Q$.

Traditionally, behaviour of algebraic effects has been formulated using \emph{algebraic equations} \cite{PlotkinPower02,Domains}.
One could for instance state that $\EPor(P,P) = P$ and $\EPor(P,Q) = \EPor(Q,P)$.
In recent research based on \cite{op_meta}, \emph{modalities} are used to formulate properties on computations that exhibit effectful behaviour \cite{modal,modal_journal,Sound}.
Using these modalities, Boolean predicates on sets of return values could be lifted to Boolean predicates on algebraic expressions over such return values. 
For instance, one might test whether the probability that a program returns an even number exceeds some threshold.
These modalities are then used as the foundation upon which a \emph{logic of program properties} is constructed, specifying a notion of \emph{behavioural equivalence} for functional languages.

In \cite{Quantitative}, such logics were generalised to \emph{quantitative logics} built using \emph{quantitative modalities}. 
In most examples of effects, it is more natural to use a singular quantitative modality, given by \emph{Eilenberg-Moore algebras} (e.g. used in \cite{HasuoGeneric}), to describe effectful behaviour.
This generalisation also enables us to describe combinations of effects more easily.

We say that an Eilenberg-Moore algebra \emph{complements} a set of equational axioms when they induce the same relation on algebraic expressions over the natural numbers. 
This then extends to them specifying the same notion of program equivalence on functional languages.
For most examples of effects, an Eilenberg-Moore algebra exists which complements the traditionally chosen set of axiomatic equations.
In this paper, we show that these algebras can be directly constructed from the algebraic relations on algebraic expressions induced by the axiomatic equations, motivating their formulation in the literature.

In general, we construct from any set of equations an Eilenberg-Moore algebra.
We show that this algebra complements the set of equations if an additional, relatively weak, property is satisfied. 
If a stronger property is satisfied, which does not hold for all examples, we can also generically generate a set of Boolean modalities complementing the axiomatic equations.

In Section~\ref{Sec:equat}, we study algebraic expressions and algebraic relations given by effects in general, and several examples in particular. In Section~\ref{Sec:EM} we look at the complementing view on effects, Eilenberg-Moore algebras, and how to construct them using algebraic relations.
Section~\ref{Sec:extra} discusses some extra topics surrounding effect descriptions, after which we look at Boolean modalities in Section~\ref{Sec:Bool}.

\section{Operations and Equations}\label{Sec:equat}

For each effect, we specify an \emph{effect signature} of algebraic operations $\Sigma$, containing effect operators $\EfOp \in \Sigma$ with an associated \emph{arity} $\Arr{\EfOp} \in \Nat \cup \{\Nat\}$.
See Subsection~\ref{sub:exa1} for examples.

\begin{defi}\label{definition:tree}
	An \emph{effect tree} (henceforth \emph{tree}), over a set $X$, determined by a signature $\Sigma$ of effect operators,  is a labelled tree of possibly infinite depth, whose nodes have the following possible forms:
	\begin{enumerate}
		\item A leaf node labelled $\bot$ (representing divergence).
		\item A leaf node labelled $\top$ (representing success or termination).
		\item A leaf node labelled $\leaf{x}$ where $x \in X$.
		\item A node labelled $\EfOp$ with children $t_1,\dots, t_{m}$, when the operator $\EfOp \in \Sigma$ has arity $\Arr{\EfOp} = m$. In this case, we write the subtree at that node as $\EfOp\langle t_1,\dots,t_m\rangle$.
		\item A node labelled $\EfOp$ with an infinite sequence $t_0,t_1,\dots$ of children, when the operator $\EfOp \in \Sigma$ has arity $\Arr{\EfOp} = \Nat$. We write the subtree at that node as $\EfOp\langle m \mapsto t_m \rangle$ (we may use this notation for the nodes described in point 4 too).
	\end{enumerate}
\end{defi}
\noindent
This definition varies slightly from effect trees used in \cite{modal_journal,Quantitative}, with the addition of a top element $\top$.

We define a functor $\Trees{_{\Sigma}(-)}$ on the category of sets, sending each set $X$ to the set of trees $\Trees{_{\Sigma}X}$ over $X$ determined by $\Sigma$, and sending each function $f: X \to Y$ to the function $\Trees{f}: \Trees{_{\Sigma}X} \to \Trees{_{\Sigma}Y}$ replacing each leaf $\leaf{x}$ of its input by $\leaf{f(x)}$.
We will henceforth write $\Trees{X}$ instead of $\Trees{_{\Sigma}X}$, leaving the underlying signature $\Sigma$ implicit.
The functor $\Trees{(-)}$ determines a monad $(\Trees{ }, \eta, \mu)$, where $\eta(x) = \leaf{x}$, and $\mu$ flattens a \emph{double-tree} $d \in \Trees{\Trees{X}}$ into a tree $\mu d \in \Trees{X}$ by replacing each leaf $\leaf{t}$ of $d$ by $t$ as a subtree.
Both $\eta$ and $\mu$ are natural transformations, satisfying the monad laws.
For $f: X \to \Trees{Y}$, define $f^* := \mu \circ \Trees{(f)} : \Trees{X} \to \Trees{Y}$.

Given a preorder $(X,\leq_X)$ we define an order on $\Trees{X}$ coinductively according to the following rules:
\begin{enumerate}
	\item $\forall t \in \Trees{X}. \ \bot \leq_{\Trees{X}} t \leq_{\Trees{X}} \top$.
	\item $\forall x, y \in X. \ x \leq_X y \ \ \implies \ \ \leaf{x} \leq_{\Trees{X}} \leaf{y}$.
	\item $\EfOp\langle m \to t_m \rangle \leq_{\Trees{X}} r \ \ \implies \ \ \exists r_1,r_2,\dots \in \Trees{X} \ \text{s.t.}  \ (r = \EfOp\langle m \to r_m \rangle \wedge \forall m. \ t_m \leq_{\Trees{X}} r_m)$.
\end{enumerate}

An order is \emph{$\omega$-complete} if it contains the supremum (limit) of any ascending sequence of elements.
If $X$ is an $\omega$-complete preorder, then $\Trees{X}$ is an $\omega$-complete preorder.
Note that if $X$ has the discrete order, it is $\omega$-complete, and hence $\Trees{X}$ is $\omega$-complete, and its order is specified by rules 1 and 3 only.

\subsection{Equations and Inequations}

We use the natural numbers $\Nat$ to describe a countable set of variables, and trees $\Trees{\Nat}$ as the set of possibly infinite \emph{algebraic expressions}.
An \emph{algebraic equation} is simply the assertion that two expressions $a,b \in \Trees{\Nat}$ are equal `$a = b$', and an \emph{algebraic inequation} is the assertion that two expressions $a,b \in \Trees{\Nat}$ are ordered `$a \leq b$'.
Both such statements can be seen as elements of $(\Trees{\Nat})^2$.

We study \emph{algebraic relations} $\IEQ \subseteq (\Trees{\Nat})^2$ containing such assertions, in particular inequations, and write $\Ias{a}{b}$ for $(a,b) \in \IEQ$.
We study properties of $\IEQ$, given e.g. in \cite{op_meta,LopezSimpson}.
\begin{enumerate}
	\item[R.] $\IEQ$ is \emph{reflexive} if for any $a \in \Trees{\Nat}$, then $\Ias{a}{a}$.
	\item[T.] $\IEQ$ is \emph{transitive} if for any $a, b, c \in \Trees{\Nat}$, $\Ias{a}{b} \wedge \Ias{b}{c} \implies \Ias{a}{c}$.
\end{enumerate}
If $\IEQ$ satisfies R and T, then it is a preorder. The next two properties discuss substituting trees for variables in the algebraic expressions, with compositionality from \cite{op_meta}.
\begin{enumerate}
	\item[S.] $\IEQ$ is \emph{substitutional} if $\forall (a,b) \in \IEQ$ and $f: \Nat \to \Trees{\Nat}$, \ $\Ias{f^*(a)}{f^*(b)}$.
	\item[C.] $\IEQ$ is \emph{compositional} if $\forall (a,b) \in \IEQ$ and $f,g: \Nat \to \Trees{\Nat}$, s.t. $\forall n \in \Nat. \Ias{f(n)}{g(n)}$, \ $\Ias{f^*(a)}{g^*(b)}$.
\end{enumerate}
Note that reflexivity and compositionality together imply substitutionality.
We consider two more properties, concerning the order $\leq_{\Trees{\Nat}}$ on $\Trees{\Nat}$.
\begin{enumerate}
	\item[O.] $\IEQ$ is \emph{ordered} if $\forall a,b \in \Trees{\Nat}. \ a \leq_{\Trees{\Nat}} b \implies \Ias{a}{b}$.
	\item[A.] $\IEQ$ is \emph{admissible} if for any two increasing sequences $\{a_n\}_{n \in \Nat}$ and $\{b_n\}_{n \in \Nat}$, if $(\forall n \in \Nat. \ \Ias{a_n}{b_n})$ then $\Ias{\bigvee_{n \in \Nat} a_n}{\bigvee_{n \in \Nat} b_n}$ \ (here $\bigvee_{n \in \Nat} c_n$ is the limit/sumpremum of the sequence $\{c_n\}_{n \in \Nat}$).
\end{enumerate}
Note that if $\IEQ$ is ordered, then $\Ias{\bot}{a} \ \Ias{}{\top}$  for any $a \in \Trees{\Nat}$. Moreover, if $\IEQ$ is ordered, then it is reflexive.
We call an algebraic relation $\IEQ$ \emph{complete} if it satisfies all of the six properties given above, though as noted it is enough to satisfy T, C, O and A. Note that $\leq_{\Trees{\Nat}}$ is a complete algebraic relation.

Given a set of axioms $\Axioms \subseteq (\Trees{\Nat})^2$, we define the resulting algebraic relation $\IEQ(\Axioms)$ as the transitive, compositional and admissible closure of the relation given by $\Axioms \cup (\leq_{\Trees{\Nat}})$.
As such, $\IEQ{(\Axioms)}$ is the smallest complete relation containing $\Axioms$. See \cite{LopezSimpson} for some more details on axiomatically defined preorders.

We consider the empty set $\Zero$ as a subset of $\Nat$ given by $\Zero = \{\}$.
As such, we see $\Trees{\Zero}$ as the subset of $\Trees{\Nat}$ containing algebraic expressions without variables, only having $\top$ and $\bot$ as leaves.
In this paper, $\Trees{\Zero}$ takes the place of the set unit type trees from \cite{modal_journal} as the basis for studying effects\footnote{There is a bijection between the two sets, with the $\top$ leaf corresponding to the unit leaf. However, this bijection does not preserve the order.}.

We consider one more property for algebraic relations.
\begin{defi}
	$\IEQ$ is \emph{base-valued} if for any $a,b \in \Trees{\Nat}$:
	$$(\forall f: \Nat \to \Trees{\Zero}. \ \Ias{f^*(a)}{f^*(b)}) \ \implies \ \Ias{a}{b} \enspace . $$
\end{defi}
The property asserts that the algebraic relation $\IEQ$ is completely specified by its subset $\IEQ \cap (\Trees{\Zero})^2$, which we call the \emph{base relation} $\IU$.

\subsection{Effect examples}\label{sub:exa1}
We look at some examples of effects and their algebraic operations.
Moreover, we will specify the usual axiomatic equations given in the literature (e.g. in \cite{PlotkinPower02,Domains}). For clarity, we will use variables $x,y,z,\dots$ instead of numbers when writing elements of $\Trees{\Nat}$, and we will often leave out the leaf-notation, writing $x$ instead of $\leaf{x}$.
For each example, $\IEQ(\Axioms)$ turns out to be base-valued, but we omit the proofs.

\begin{exa}[Nondeterminism]\label{exa:nond1}
	
	We first consider the example of nondeterminism, where the effect signature $\Sigma$ contains a single algebraic effect operator `$\ENor$' of arity $\Arr{\ENor} =  2$.
	This operator chooses between two possible continuations in a completely unpredictable manner, under control of a scheduler which makes choices according to some unknown decision process.
	Because of its unpredictable nature, no probability can be associated to the choices.
	As such, its equational axioms are given by idempotency, symmetry  and associativity:
	\vspace{-1mm}
	$$\ENor(x,x) = x, \qquad \ENor(x,y) = \ENor(y,x), \qquad  \ENor(x,\ENor(y,z)) = \ENor(\ENor(x,y),z).$$
\end{exa}

\begin{exa}[Probability]\label{exa:prob1}
	We consider the example of probability, with one algebraic effect operator `$\EPor$' with arity $2$, which chooses between two continuations randomly, by fair choice.
	In this case, the equational axioms are given by idempotency, symmetry together with two more axioms:
	\vspace{-1mm}
	\begin{align*}
	& \EPor(x,x) = x, \qquad \EPor(x,y) = \EPor(y,x)\\
	& \EPor(\EPor(x,y),\EPor(z,w)) = \EPor(\EPor(x,z),\EPor(y,w)), \qquad  \mu x.\EPor(y,x) = y.
	\end{align*}
	Here, $\mu x.\EPor(y,x)$ stands for the infinite tree $t$ such that $t = \EPor(y,t)$. 
\end{exa}

\begin{exa}[Global Store]\label{exa:glob1}
	We consider a global memory location which contains some natural number.
	Our effect signature $\Sigma$ contains a lookup operator `$\ELo$' with arity $\Nat$, which looks up the stored natural number and continues the computation accordingly, and for each $n \in \Nat$ we have an update operator `$\EUp{n}$' which updates the stored number to an $n$ (this can be generalised to multiple store locations).
	We have the following equations as axioms, ranging over natural numbers $n,m \in \Nat$:
	\vspace{-1mm}
	\begin{align*}
	& \EUp{n}(\EUp{m}(x)) = \EUp{m}(x), \qquad \ELo(m \mapsto x) = x, \qquad \EUp{n}(\ELo(m \mapsto x_m)) = x_n,\\
	& \ELo(m \mapsto \EUp{m}(x_m)) = \ELo(m \mapsto x_m).
	\end{align*}
\end{exa}

\begin{exa}[Exception catching]\label{exa:exep1}
	This example is similar to the algebraic description of the jump effect from \cite{Fiore}.
	We consider a set of exceptions $\Exc$, and for each $e \in \Exc$ we have an operator `$\EEx{e}$' of arity $0$ raising the exception, and an operator `$\ECa{e}$' of arity $2$ catching that exception.
	The computation $\ECa{e}(P,Q)$ will execute the computation $P$, and if the exception $e$ is raised by $P$, it continues by executing the computation given by $Q$.
	We consider the following axiomatic equations:
	\vspace{-1mm}
	\begin{align*}
	& \ECa{e}(\EEx{e},x) = x, \qquad \ECa{e}(\EEx{d},x) = \EEx{d} \ \ \text{if} \ e \neq d, \quad \ECa{e}(x,x) = x,\\ 
	& \ECa{e}(\ECa{e}(x,y),z) = \ECa{e}(x,\ECa{e}(y,z)), \qquad \ECa{e}(\bot,x) = \bot, \qquad \ECa{e}(\top,x) = \top.
	\end{align*}
\end{exa}

\begin{exa}[Input]\label{exa:inpu1}
	We consider the situation in which a computation may ask for a binary input from the user of the computer.
	This is modelled using a single operation `$\EIn$' of arity two, where $\EIn(P,Q)$ is the computation which asks a binary input, and continues with $P$ if the input is $0$, and $Q$ if the input is $1$.
	In this example, the entity giving the inputs can keep track of what choices are made.
	As a result, we will not assume any axioms, since any two different trees of $\Trees{\Nat}$ can be distinguished by testing their evaluation with a particular sequence of inputs.
\end{exa}

\begin{exa}[Cost]\label{exa:cost1}
	We consider the situation in which we associate a cost to computation, for instance energy, time, or monetary cost necessary to evaluate a program.
	We consider a single \emph{tick} operation `$\ETi$' with arity one, where $\ETi(P)$ evaluates $P$ after a unit of cost has been paid.
	Elements of $\Trees{\Nat}$ are given by a sequence of ticks, which is either infinite, or results in $\bot$, $\top$, or a natural number.
	We consider one axiom, asserting that cheaper is better:
	\vspace{-1mm}
	$$ \ETi(x) \ \leq \ x \ . $$
	As a consequence, we can show that $\bot \leq \ETi(\bot) \leq \bot$, and hence $\ETi(\bot) = \bot$.
	Using admissibility, we can prove that the algebraic expression given by an infinite sequence of ticks is equal to $\bot$.
\end{exa}

\begin{exa}[Nondeterminism + Probability]\label{exa:comb1}
	As a last example, we look at a combination of effects.
	As signature we take $\Sigma := \{\ENor,\EPor\}$ with two binary operators, and we assume the axiomatic equations of the two effects from Examples~\ref{exa:nond} and~\ref{exa:prob} hold.
	Moreover, we state the following interaction law:
	$$\EPor(x,\ENor(y,z)) \ = \ \ENor(\EPor(x,y), \EPor(x,z)) \ .$$
\end{exa}

\section{Eilenberg-Moore Algebras}\label{Sec:EM}

On the opposite side of equations, we have distinctions. We will use Eilenberg-Moore algebras to specify tests on algebraic expressions as done in \cite{Quantitative}.
If two expressions give us a different result for a test, we consider them to be distinct.

\begin{defi}
	Given a monad $(M,\eta,\mu)$, an \emph{Eilenberg-Moore algebra} (henceforth EM-algebra) is a morphism $\alpha: M A \to A$ on some carrier object $A$, such that the following two diagrams commute:
	$$ \xymatrix{ 
		A \ar@{->}[r]_{\eta_A} & M A \ar@{->}[d]_{\alpha} & & M M A \ar@{->}[r]_{M \alpha} \ar@{->}[d]_{\mu_A} & M A \ar@{->}[d]_{\alpha}\\
		& A \ar@{<-}[ul]^{\textit{id}}  & & M A \ar@{->}[r]_{\alpha} & A
	}$$
\end{defi}

Given a preorder $(\Arb, \fimp)$, and an algebra $\alpha: \Trees{\Arb} \to \Arb$ on the tree functor $\Trees{(-)}$, we define a relation $\Arel \ \subseteq (\Trees{\Nat})^2$ as follows:
$$ a \Arel b \quad \iff \quad \forall h: \Nat \to \Arb. \alpha(\Trees{(h)}(a)) \fimp \alpha(\Trees{(h)}(b))$$
We say that $\alpha$ \emph{complements} $\IEQ$ if $\Arel$ coincides with $\Ias{}{}$.
The algebra $\alpha$ \emph{complements} a set of axiomatic equations $\Axioms$, if $\alpha$ complements $\IEQ{(\Axioms)}$.

Suppose $\alpha$ complements $\Axioms$.
For any two algebraic expressions $a,b \in \Trees{\Nat}$, it is either possible to show that $a \sqsubseteq_{\IEQ{(\Axioms)}} b$ using the axioms from $\Axioms$ and proof rules such as compositionality and admissibility, or it is possible to show that $a \not\sqsubseteq_{\IEQ{(\Axioms)}} b$ using the EM-algebra $\alpha$ together with some test $h: \Nat \to \Arb$.
As such, we have both a method for showing equivalence, and for showing inequivalence.

Note that $\Arel$ is reflexive and transitive. We look at some other general results.
\begin{lem}
	If $\alpha$ is an EM-algebra, then $\Arel$ is substitutional.
\end{lem}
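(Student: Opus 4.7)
The plan is to unfold the definition of $\Arel$ on the substituted expressions and reduce back to an instance of the hypothesis $a \Arel b$ by constructing a cleverly chosen test $h' : \Nat \to \Arb$. Concretely, suppose $a \Arel b$ and fix an arbitrary $f : \Nat \to \Trees{\Nat}$; we must show $f^*(a) \Arel f^*(b)$, i.e. for every $h : \Nat \to \Arb$, $\alpha(\Trees{h}(f^*(a))) \fimp \alpha(\Trees{h}(f^*(b)))$.

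The main calculation rewrites $\alpha \circ \Trees{h} \circ f^*$ into the form $\alpha \circ \Trees{h'}$ for a suitable $h'$. Using $f^* = \mu \circ \Trees{f}$ and naturality of $\mu$, we have $\Trees{h} \circ f^* = \Trees{h} \circ \mu \circ \Trees{f} = \mu \circ \Trees{\Trees{h}} \circ \Trees{f} = \mu \circ \Trees{(\Trees{h} \circ f)}$. Applying $\alpha$ on the left and using the associativity square of the EM-algebra ($\alpha \circ \mu = \alpha \circ \Trees{\alpha}$) gives
\[
\alpha \circ \Trees{h} \circ f^* \;=\; \alpha \circ \Trees{\alpha} \circ \Trees{(\Trees{h} \circ f)} \;=\; \alpha \circ \Trees{(\alpha \circ \Trees{h} \circ f)}.
\]
Now define $h' := \alpha \circ \Trees{h} \circ f : \Nat \to \Arb$; the displayed identity says $\alpha(\Trees{h}(f^*(c))) = \alpha(\Trees{h'}(c))$ for every $c \in \Trees{\Nat}$, in particular for $c = a$ and $c = b$.

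Finally, applying the hypothesis $a \Arel b$ to the test $h'$ yields $\alpha(\Trees{h'}(a)) \fimp \alpha(\Trees{h'}(b))$, which by the identity above is precisely $\alpha(\Trees{h}(f^*(a))) \fimp \alpha(\Trees{h}(f^*(b)))$. Since $h$ was arbitrary, $f^*(a) \Arel f^*(b)$, as required. The only real content is the naturality-plus-algebra-law manipulation collapsing $\Trees{h} \circ f^*$ into $\Trees{h'}$; I do not expect any genuine obstacle, merely care in tracking the naturality square of $\mu$ and the multiplication axiom of the algebra.
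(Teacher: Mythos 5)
Your proof is correct and takes essentially the same route as the paper: the identity $\alpha \circ \Trees{h} \circ f^* = \alpha \circ \Trees{(\alpha \circ \Trees{h} \circ f)}$, obtained from naturality of $\mu$ and the multiplication law of the EM-algebra, is exactly the commuting diagram the paper exhibits, and your test $h' = \alpha \circ \Trees{h} \circ f$ is the same substituted test used there.
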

\begin{proof}
	Assume $a \Arel b$, and take $f: \Nat \to \Trees{\Nat}$, we need to show that $f^*(a) \Arel f^*(b)$.
	Let $h: \Nat \to \Arb$, then $(\alpha \circ \Trees{h} \circ f): \Nat \to \Arb$, and hence
	$\alpha(\Trees{(\alpha \circ \Trees{h} \circ f)}(a)) \ \fimp \ \alpha(\Trees{(\alpha \circ \Trees{h} \circ f)}(b))$.
	Note that the following diagram commutes:
	\begin{equation}\label{dia:lift}
	\xymatrix{
		\Trees{\Nat} \ar@{->}[r]_{T f} \ar@{->}[dr]_{f^*} & \Trees{\Trees{\Nat}} \ar@{->}[r]_{\Trees{\Trees{h}}} \ar@{->}[d]^{\mu_{\Nat}} & \Trees{\Trees{\Arb}} \ar@{->}[r]_{\Trees{\alpha}} \ar@{->}[d]_{\mu_{\Arb}} & \Trees{\Arb} \ar@{->}[d]_{\alpha}\\
		& \Trees{\Nat} \ar@{->}[r]^{\Trees{h}} & \Trees{\Arb} \ar@{->}[r]^{\alpha} & \Arb
	}
	\end{equation}
	Hence $(\alpha \circ \Trees{h} \circ f^*) \ (a) \ \fimp \ (\alpha \circ \Trees{h} \circ f^*) \ (b)$.
	So we have the desired result.
\end{proof}

We call $\alpha$ \emph{monotone} if it preserves order: $\forall a, b \in \Trees{\Arb}. \ a \leq_{\Trees{\Arb}} b \implies \alpha(a) \fimp \alpha(b)$.

\begin{lem}
	If $\alpha$ is a monotone EM-algebra, then $\Arel$ is ordered and compositional.
\end{lem}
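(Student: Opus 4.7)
The plan is to prove both assertions by reducing to the monotonicity of $\alpha$ via an auxiliary fact about the tree functor. Specifically, I first establish that for any monotone map $k : (X, \leq_X) \to (Y, \leq_Y)$ between preorders, the lifted map $\Trees{k} : \Trees{X} \to \Trees{Y}$ is monotone with respect to the coinductively defined orders. This is a coinductive argument tracking the three rules: $\bot$ and $\top$ are preserved as the least and greatest element; leaves are sent to leaves respecting the relation since $k$ is monotone; and an operator node $\EfOp\langle m \mapsto t_m \rangle$ is mapped to $\EfOp\langle m \mapsto \Trees{k}(t_m) \rangle$, so the coinductive hypothesis applies componentwise to the children.

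For orderedness, assume $a \leq_{\Trees{\Nat}} b$ and fix any $h : \Nat \to \Arb$. Since $\Nat$ carries the discrete order, $h$ is trivially monotone, so the auxiliary fact yields $\Trees{h}(a) \leq_{\Trees{\Arb}} \Trees{h}(b)$, and monotonicity of $\alpha$ gives $\alpha(\Trees{h}(a)) \fimp \alpha(\Trees{h}(b))$, which is exactly $a \Arel b$.

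For compositionality, assume $a \Arel b$ and $f(n) \Arel g(n)$ for every $n \in \Nat$, and fix $h : \Nat \to \Arb$. Define $\phi := \alpha \circ \Trees{h} \circ f$ and $\psi := \alpha \circ \Trees{h} \circ g$. Using the commuting diagram~\ref{dia:lift} from the previous lemma, $\alpha \circ \Trees{h} \circ f^* = \alpha \circ \Trees{\phi}$ and $\alpha \circ \Trees{h} \circ g^* = \alpha \circ \Trees{\psi}$. I then chain through an intermediate term:
\begin{itemize}
\item $\alpha(\Trees{\phi}(a)) \fimp \alpha(\Trees{\phi}(b))$ by instantiating the hypothesis $a \Arel b$ with the test $\phi$;
\item $\alpha(\Trees{\phi}(b)) \fimp \alpha(\Trees{\psi}(b))$ because instantiating $f(n) \Arel g(n)$ with $h$ yields $\phi(n) \fimp \psi(n)$ for each $n$, so the auxiliary monotonicity fact gives $\Trees{\phi}(b) \leq_{\Trees{\Arb}} \Trees{\psi}(b)$, and monotonicity of $\alpha$ concludes.
\end{itemize}
Transitivity of $\fimp$ in $\Arb$ combines the two steps into $\alpha(\Trees{h}(f^*(a))) \fimp \alpha(\Trees{h}(g^*(b)))$, which is the desired inequality.

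The main obstacle is setting up the auxiliary monotonicity of $\Trees{k}$ crisply as a coinductive argument, since the order on $\Trees{X}$ itself is defined coinductively and the operator case requires the sibling children to be matched index-wise. Once that is in hand, ordered reduces to a direct application and compositional falls out of the interchange diagram~\ref{dia:lift} together with the single intermediate term $\alpha(\Trees{\phi}(b))$.
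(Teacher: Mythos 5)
Your proof is correct and follows essentially the same route as the paper's: orderedness via monotonicity of the lifted map and of $\alpha$, and compositionality via diagram~(\ref{dia:lift}) plus transitivity through one intermediate corner of the square (you pass through $f^*(b)$ where the paper passes through $g^*(a)$ via the substitutivity lemma --- an immaterial difference). The only point to tidy is that the compositionality step needs the auxiliary fact in the form ``if $\phi(n) \fimp \psi(n)$ for all $n$ then $\Trees{(\phi)}(b) \leq_{\Trees{\Arb}} \Trees{(\psi)}(b)$'' for two pointwise-ordered functions, not merely monotonicity of $\Trees{(k)}$ for a single monotone $k$; it is established by the same coinduction, so this is a statement fix rather than a gap.
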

\begin{proof}
	Let $a \leq_{\Trees{\Nat}} b$ and $h: \Nat \to \Arb$, then $\Trees{(h)}(a) \leq_{\Trees{\Arb}} \Trees{(h)}(b)$ and hence by monotonicity of $\alpha$ we get $\alpha(\Trees{(h)}(a)) \fimp \alpha(\Trees{(h)}(b))$.
	
	Assume $a \Arel b$, and take $(f,g): \Nat \to \Trees{\Nat}$ s.t. $\forall n. \ f(n) \Arel g(n)$, we need to show that $f^*(a) \Arel g^*(b)$.
	Because of substitutivity, $g^*(a) \Arel g^*(b)$, so with transitivity it is sufficient to show that $f^*(a) \Arel g^*(a)$.
	
	Let $h: \Nat \to \Arb$, so for all $n \in \Nat$, $\alpha(\Trees{(h)}(f(n))) \fimp \alpha(\Trees{(h)}(g(n)))$. Hence
	$\Trees{(\alpha \circ \Trees{h} \circ f)}(a) \ \leq_{\Trees{\Arb}} \ \Trees{(\alpha \circ \Trees{h} \circ g)}(a)$, and since $\alpha$ is monotone:
	$$
	\alpha(\Trees{(\alpha \circ \Trees{h} \circ f)}(a)) \ \fimp \ \alpha(\Trees{(\alpha \circ \Trees{h} \circ g)}(a)) \enspace . \\
	$$
	Using diagram (\ref{dia:lift}) from the previous lemma, we conclude that $(\alpha \circ \Trees{h} \circ f^*) \ (a) \ \fimp \ (\alpha \circ \Trees{h} \circ g^*) \ (a)$, and hence $f^*(a) \Arel g^*(a)$. We conclude that $\Arel$ is compositional.
\end{proof}

Last but not least, we establish a sufficient condition for admissibility. Note that $\alpha$ is a morphism in the category of $\omega$-cpos precisely if $\Arb$ is $\omega$-complete and $\alpha$ preserves limits of ascending sequences.

\begin{lem}
	If $\alpha: \Trees{\Arb} \to \Arb$ is an EM-algebra in the category of $\omega$-cpos, then $\Arel$ is admissible.
\end{lem}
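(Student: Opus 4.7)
The plan is to exploit $\omega$-continuity of both $\alpha$ and the action $\Trees{h}$ in order to pass pointwise comparisons through to suprema.

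Fix ascending chains $\{a_n\}_{n \in \Nat}$ and $\{b_n\}_{n \in \Nat}$ in $\Trees{\Nat}$ with $a_n \Arel b_n$ for every $n$, and let $h : \Nat \to \Arb$ be an arbitrary test. The first step is to observe that the composite $\alpha \circ \Trees{h} : \Trees{\Nat} \to \Arb$ preserves suprema of ascending chains. Since $\Nat$ carries the discrete order, $h$ is automatically a morphism of $\omega$-cpos, and the coinductive clauses defining the order on $\Trees{(-)}$ ensure that $\Trees{(-)}$ restricts to an endofunctor on $\omega$-cpos sending continuous maps to continuous maps; hence $\Trees{h}$ is $\omega$-continuous. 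Composing with $\alpha$, which is $\omega$-continuous by hypothesis, yields the claim.

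With this in hand, I compute
\[ \alpha(\Trees{h}(\bigvee_{n} a_n)) \ = \ \bigvee_{n} \alpha(\Trees{h}(a_n)) \ \fimp \ \bigvee_{n} \alpha(\Trees{h}(b_n)) \ = \ \alpha(\Trees{h}(\bigvee_{n} b_n)), \]
where the outer equalities use $\omega$-continuity of $\alpha \circ \Trees{h}$, and the middle inequality is the standard fact that in an $\omega$-cpo two ascending chains related pointwise by $\fimp$ remain related at their suprema; this is applied to $\alpha(\Trees{h}(a_n)) \fimp \alpha(\Trees{h}(b_n))$, which holds because $a_n \Arel b_n$. Since $h$ was arbitrary, we conclude $\bigvee_n a_n \Arel \bigvee_n b_n$, as required.

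The main obstacle is justifying that $\Trees{h}$ is $\omega$-continuous. The paper already notes that $\Trees{(-)}$ preserves $\omega$-completeness of objects; what remains is to check that it preserves continuous maps. This follows from a routine analysis of the coinductive order clauses, which force suprema of ascending chains in $\Trees{X}$ to be computed shape-wise with leaves combined via suprema in $X$, so that $\Trees{h}$, which acts only by rewriting leaves through $h$, commutes with them.
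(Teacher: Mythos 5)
Your proof is correct and follows essentially the same route as the paper's: fix an arbitrary test $h$, use $\omega$-continuity of $\alpha \circ \Trees{h}$ to push the pointwise inequalities $\alpha(\Trees{(h)}(a_n)) \fimp \alpha(\Trees{(h)}(b_n))$ through to the suprema, and conclude. The only difference is that you spell out the justification that $\Trees{h}$ is $\omega$-continuous, which the paper leaves implicit.
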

\begin{proof}
	Let $\{a_i\}_{i \in \Nat}$ and $\{b_i\}_{i \in \Nat}$ be ascending sequences of trees from $\Trees{\Nat}$ such that $\forall i \in \Nat. \ a_i \Arel b_i$.
	Let $h: \Nat \to \Arb$, then $\forall i \in \Nat. \ \alpha(\Trees{(h)}(a_i)) \fimp \alpha(\Trees{(h)}(b_i))$ and hence: 
	$$\alpha(\Trees{(h)}(\bigvee_{i \in \Nat}a_i)) = \bigvee_{i \in \Nat}\alpha(\Trees{(h)}(a_i)) \fimp \bigvee_{i \in \Nat}\alpha(\Trees{(h)}(b_i)) = \alpha(\Trees{(h)}(\bigvee_{i \in \Nat}a_i)) \enspace .$$
	We conclude that $\bigvee_{i \in \Nat}a_i \Arel \bigvee_{i \in \Nat}b_i$, so $\Arel$ is admissible.
\end{proof}

\subsection{From Equations to EM-algebras}\label{sec:fromto}
We specified how we can extract a relation $\Arel$ on $\Trees{\Nat}$ from an EM-algebra $\alpha$. If this algebra forms a morphism in the category of $\omega$-cpos, the resulting relation is complete.
We will now go in the other direction, extracting an algebra from a relation on $\Trees{\Nat}$ in a novel way. 
In particular, we will formulate an EM-algebra using the relation $\Ias{}{}$ specified by $\IEQ$.

We denote by $\overline{\mathcal{R}}$ the largest symmetric subset of a relation $\mathcal{R}$. 	Remember that $\IU = \IEQ \cap (\Trees{\Zero})^2$. 
\begin{defi}
	The \emph{value space} of $\IEQ$ is given by $\ClIU := \{\clos{a} \mid a \in \Trees{\Zero}\}$, where $\clos{a} := \{b \in \Trees{\Zero} \mid a \ \overline{\IU} \ b\}$.
\end{defi}
We have a function $\clos{-}: \Trees{\Zero} \to \ClIU$ defined by $a \mapsto \clos{a}$.
Note that by transitivity, $(b \in \clos{a}) \iff ({(\clos{a} \cap \clos{b})} \neq \emptyset) \iff (\clos{a} = \clos{b}) \iff (a \ \overline{\IU} \ b)$.
We define an order $\fimp$ on $\ClIU$ where $(\clos{a} \fimp \clos{b}) :\Leftrightarrow (a \ \IU \ b)$, which by transitivity of $\IU$ is well-defined.

A \emph{choice function} for $\ClIU$ is a function $c: \ClIU \to \Trees{\Zero}$ such that for all $S \in \ClIU$, $c(S) \in S$. Note by the above properties that for all $S \in \ClIU$, $\clos{c(S)} = S$.
We specify an algebra $\alpha_c: \Trees{\ClIU} \to \ClIU$ as the function which makes the following diagram commute:
$$\xymatrix{
	\Trees{\ClIU} \ar@{.>}[r]_{\alpha_c} \ar@{->}[d]_{T c} & \ClIU \\
	\Trees{\Trees{\Zero}} \ar@{->}[r]_{\mu_{\Zero}} & \Trees{\Zero} \ar@{->}[u]_{\clos{-}}
}$$

\begin{lem}
	If $\IEQ$ is reflexive, transitive and compositional, then for any two choice functions $c$ and $d$, $\alpha_c$ is equal to $\alpha_d$.
\end{lem}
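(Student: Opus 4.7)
The plan is to show, for every $t \in \Trees{\ClIU}$, that $\alpha_c(t) = \alpha_d(t)$. By the diagram defining $\alpha_c$ and $\alpha_d$, this unfolds to $\clos{\mu_{\Zero}(\Trees{c}(t))} = \clos{\mu_{\Zero}(\Trees{d}(t))}$, i.e.\ the two flattened trees must lie in the same $\overline{\IU}$-class of $\Trees{\Zero}$. The pointwise ingredient is immediate from the definition of a choice function: for each $S \in \ClIU$ both $c(S)$ and $d(S)$ belong to $S$, so $c(S) \ \overline{\IU} \ d(S)$. The real task is to lift this pointwise equivalence through the flattening $\mu_{\Zero}$ using only the hypotheses on $\IEQ$.

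The natural lifting tool is compositionality, but it is phrased for substitutions $\Nat \to \Trees{\Nat}$, whereas the leaves of $t$ are labelled by elements of $\ClIU$. I plan to bridge this mismatch by a reindexing. Since an effect tree has only countably many nodes (at most countable branching at each of countably many depths), the set $L \subseteq \ClIU$ of labels actually occurring at the leaves of $t$ is countable, so one may pick $\phi : \Nat \to \ClIU$ whose image contains $L$, together with $\psi : \ClIU \to \Nat$ satisfying $\phi \circ \psi = \textit{id}$ on $L$. I then set $t' := \Trees{\psi}(t) \in \Trees{\Nat}$ and define $f, g : \Nat \to \Trees{\Nat}$ by $f(n) := c(\phi(n))$ and $g(n) := d(\phi(n))$; the pointwise observation gives both $\Ias{f(n)}{g(n)}$ and $\Ias{g(n)}{f(n)}$ for every $n$.

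Reflexivity of $\IEQ$ yields $\Ias{t'}{t'}$, and applying compositionality in both directions produces $\Ias{f^*(t')}{g^*(t')}$ and $\Ias{g^*(t')}{f^*(t')}$. A short monad-level calculation then identifies the two sides with the trees of interest: using functoriality of $\Trees{}$ and the right-inverse property on $L$, $\Trees{f}(t') = \Trees{(f \circ \psi)}(t) = \Trees{c}(t)$ (the equality being on leaves of $t$, all of which lie in $L$), whence $f^*(t') = \mu_{\Zero}(\Trees{c}(t))$, and symmetrically $g^*(t') = \mu_{\Zero}(\Trees{d}(t))$. This gives $\mu_{\Zero}(\Trees{c}(t)) \ \overline{\IU} \ \mu_{\Zero}(\Trees{d}(t))$, which is precisely $\alpha_c(t) = \alpha_d(t)$. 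I expect the main subtlety to be the reindexing bookkeeping rather than any deep inequational reasoning: one must check that the behaviour of $f, g$ on indices outside $\psi(L)$ never enters $f^*(t')$ or $g^*(t')$, and that $\Trees{(f \circ \psi)}$ faithfully reproduces $\Trees{c}$ on every leaf of $t$. Everything else is a direct application of reflexivity plus compositionality to lift a pointwise $\overline{\IU}$-equivalence through the monad multiplication.
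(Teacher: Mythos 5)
Your proposal is correct and follows essentially the same route as the paper's proof: both reduce the problem to a countable reindexing of the leaves (your $t' = \Trees{\psi}(t)$ with $\phi \circ \psi = \textit{id}$ on the occurring labels is exactly the paper's choice of $k \in \Trees{\Nat}$ and $f : \Nat \to \ClIU$ with $t = \Trees{f}(k)$), and then apply reflexivity plus compositionality in both directions to the pointwise fact $c(S) \ \overline{\IU} \ d(S)$. The only difference is that you spell out the reindexing bookkeeping more explicitly than the paper does.
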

\begin{proof}
	Note that for any $S \in \ClIU$, $c(S) \in S$ and $d(S) \in S$, hence $c(S) \ \overline{\IU} \ d(S)$. Take some $t \in \Trees{\ClIU}$.
	Note that the number of leaves of $t$ is countable, hence we can find a tree $k \in \Trees{\Nat}$ and a function $f: \Nat \to \ClIU$ such that $t = T(f)(k)$.
	Since $k \ \IEQ \ k$ by reflexivity, and for each $n \in \Nat$, $c(f(n)) \ \overline{\IU} \ d(f(n))$, it holds by compositionality that $(c \circ f)^*(k) \ \overline{\IU} \ (d \circ f)^*(k)$.
	Hence $\alpha_c(t) = \clos{\mu_{\Zero}(T(c)(t))} = \clos{(c \circ f)^*(k)} = {\clos{(d \circ f)^*(k)}} = \clos{\mu_{\Zero}(T(d)(t))} = \alpha_d(t)$.
\end{proof}

Hence, $\alpha_c$ is invariant under choice of $c$.
From now on, we will fix a choice function $c$, and simply write $\alpha$ for $\alpha_c$.
We will show that $\alpha$ is an EM-algebra. First, we establish a useful lemma.

\begin{lem}\label{lem:comm}
	If $\IEQ$ is reflexive, transitive and compositional, then the following diagram commutes:
	$$\xymatrix{
		\Trees{\ClIU} \ar@{->}[r]_{\alpha_c}  & \ClIU \\
		\Trees{\Trees{\Zero}} \ar@{->}[r]_{\mu_{\Zero}} \ar@{->}[u]^{T\clos{-}} & \Trees{\Zero} \ar@{->}[u]_{\clos{-}}
	}$$
\end{lem}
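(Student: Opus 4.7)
Unfolding the definition of $\alpha_c$ and the fact that $T(-)$ is functorial, for any $d \in \Trees{\Trees{\Zero}}$ we have
$$\alpha_c(T\clos{-}(d)) \;=\; \clos{\mu_{\Zero}(T(c)(T\clos{-}(d)))} \;=\; \clos{\mu_{\Zero}(T(c \circ \clos{-})(d))}.$$
So the goal reduces to showing $\clos{\mu_{\Zero}(T(c \circ \clos{-})(d))} = \clos{\mu_{\Zero}(d)}$, i.e. that the two flattenings are $\overline{\IU}$-equivalent.

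The key observation is that $e := c \circ \clos{-} : \Trees{\Zero} \to \Trees{\Zero}$ satisfies $e(a) \ \overline{\IU} \ a$ for every $a \in \Trees{\Zero}$, since $c(\clos{a}) \in \clos{a}$ by the definition of a choice function. From here I would mimic the technique of the previous lemma: because $d$ has at most countably many leaves, choose $k \in \Trees{\Nat}$ and $g : \Nat \to \Trees{\Zero}$ such that $d = T(g)(k)$. Then
$$\mu_{\Zero}(d) = g^*(k), \qquad \mu_{\Zero}(T(e)(d)) = (e \circ g)^*(k).$$

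To compare these, note that for every $n \in \Nat$ we have $(e \circ g)(n) \ \overline{\IU} \ g(n)$, which (since $\IU \subseteq \IEQ$) gives $\Ias{(e \circ g)(n)}{g(n)}$ and $\Ias{g(n)}{(e \circ g)(n)}$. Together with $\Ias{k}{k}$ (reflexivity), applying compositionality in both directions yields $(e \circ g)^*(k) \ \overline{\IEQ} \ g^*(k)$. Since both sides lie in $\Trees{\Zero}$, this is exactly $(e \circ g)^*(k) \ \overline{\IU} \ g^*(k)$, so their closures agree, and chasing back through the definitions gives $\alpha_c(T\clos{-}(d)) = \clos{\mu_{\Zero}(d)}$.

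The only delicate point is the step of writing $d$ in the form $T(g)(k)$, which relies on $d$ having countably many leaves; this is the same move used in the preceding lemma, and once granted, compositionality does all the real work. There is no serious obstacle beyond keeping track of which relation ($\IU$, $\overline{\IU}$, $\IEQ$, $\overline{\IEQ}$) is being used at each step and verifying that the restriction from $\Trees{\Nat}$ to $\Trees{\Zero}$ is harmless because both terms in the final comparison genuinely live in $\Trees{\Zero}$.
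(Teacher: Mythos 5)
Your proof is correct and follows essentially the same route as the paper's: both decompose $d \in \Trees{\Trees{\Zero}}$ as $T(g)(k)$ for some $k \in \Trees{\Nat}$ and $g : \Nat \to \Trees{\Zero}$, observe that $c(\clos{a}) \ \overline{\IU} \ a$ by the choice-function property, and apply reflexivity plus compositionality to conclude the two flattenings are $\overline{\IU}$-equivalent and hence have the same closure. No gaps; your version just spells out the functoriality and relation-bookkeeping steps slightly more explicitly.
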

\begin{proof}
	This has a similar proof to the previous lemma. Take $t \in TT\Zero$, and define $k \in \Trees{\Nat}$ and $f: \Nat \to \Trees{\Zero}$ such that $t = T(f)(k)$.
	For all $n \in \Nat$, $f(n) \ \overline{\IU} \ c(\clos{f(n)})$ holds, hence by compositionality, 
	$f^*(k) \ \overline{\IU} \ {(c \circ \clos{-} \circ f)^*(k)}$.
	So
	$\clos{\mu_{\Zero}(t)} = \clos{f^*(k)} = \clos{(c \circ [-] \circ f)^*(k)} = \clos{(c^*\circ T([-]) \circ T(f))(k)} = {\clos{c^*\circ T([-]) (t)}}  = \alpha(T([-])(t))$.
\end{proof}

\begin{pro}
	If $\IEQ$ is reflexive, transitive and compositional, then $\alpha$ is an EM-algebra.
\end{pro}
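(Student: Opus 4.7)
The plan is to verify the two Eilenberg--Moore axioms in turn. The unit law $\alpha \circ \eta_{\ClIU} = \mathrm{id}$ is a direct computation: for $S \in \ClIU$, we have $Tc(\leaf{S}) = \leaf{c(S)}$, then $\mu_{\Zero}(\leaf{c(S)}) = c(S)$, and finally $\clos{c(S)} = S$ by the defining property of a choice function. This step uses only the definition of $\alpha_c$ and none of the hypotheses on $\IEQ$.

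The multiplication law $\alpha \circ \mu_{\ClIU} = \alpha \circ T\alpha$ is the substantive part. My strategy is to lift any $d \in \Trees{\Trees{\ClIU}}$ through $c$ to a triple-tree over $\Zero$ and then reduce both sides via Lemma~\ref{lem:comm} combined with naturality and the monad laws. Concretely, set $d' := TTc(d) \in \Trees{\Trees{\Trees{\Zero}}}$. Since $\clos{c(S)} = S$ for every $S \in \ClIU$, the composite $\clos{-} \circ c$ is the identity on $\ClIU$, so $d = TT(\clos{-} \circ c)(d) = TT\clos{-}(d')$. This rewriting is what brings Lemma~\ref{lem:comm} into play on both sides.

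From the left, naturality of $\mu$ applied to $\clos{-}: \Trees{\Zero} \to \ClIU$ gives $\mu_{\ClIU}(TT\clos{-}(d')) = T\clos{-}(\mu_{\Trees{\Zero}}(d'))$, after which Lemma~\ref{lem:comm} replaces $\alpha \circ T\clos{-}$ by $\clos{-} \circ \mu_{\Zero}$, so $\alpha(\mu_{\ClIU}(d)) = \clos{\mu_{\Zero}(\mu_{\Trees{\Zero}}(d'))}$. From the right, applying Lemma~\ref{lem:comm} under a single outer $T$ gives $T\alpha(TT\clos{-}(d')) = T\clos{-}(T\mu_{\Zero}(d'))$, and a second application yields $\alpha(T\alpha(d)) = \clos{\mu_{\Zero}(T\mu_{\Zero}(d'))}$. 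The two results are then identified by the monad associativity law $\mu_{\Zero} \circ \mu_{\Trees{\Zero}} = \mu_{\Zero} \circ T\mu_{\Zero}$ on $\Trees{\Trees{\Trees{\Zero}}}$.

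The main obstacle will be bookkeeping: keeping straight which instance of $\mu$ (at $\Zero$, at $\Trees{\Zero}$, or at $\ClIU$) is being invoked at each stage, and applying Lemma~\ref{lem:comm} both directly and under an outer $T$. Once the key rewriting $d = TT\clos{-}(d')$ is in place, the remainder is a clean diagram chase, and the hypotheses on $\IEQ$ enter only implicitly, through their prior use in establishing Lemma~\ref{lem:comm} and the well-definedness of $\alpha$.
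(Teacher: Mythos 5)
Your proof is correct and follows essentially the same route as the paper: the unit law is the same direct computation, and the multiplication law is the same diagram chase combining the definition of $\alpha_c$, naturality of $\mu$, monad associativity $\mu_{\Zero} \circ \mu_{T\Zero} = \mu_{\Zero} \circ T\mu_{\Zero}$, and Lemma~\ref{lem:comm}. The only cosmetic difference is that you route through $d = TT[-](d')$ and apply naturality of $\mu$ at $[-]$, where the paper applies it at $c$; the ingredients and structure are identical.
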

\begin{proof}
	We use the monad laws together with definition of $\alpha$ and Lemma \ref{lem:comm} to observe that the following diagrams commute:
	$$\xymatrix{
		\ClIU \ar@/_4pc/[dddrrr]_{\textit{id}} \ar@/_2pc/[ddrr]_{c} \ar@{->}[rrr]^{\eta_{\ClIU}} \ar@{->}[dr]^{c} & & & T\ClIU \ar@{->}[ddd]^{\alpha} \ar@{->}[dl]^{\Trees{c}}\\
		& T\Zero \ar@{->}[r]^{\eta_{T\Zero}}  \ar@{->}[dr]_{\textit{id}} & TT\Zero \ar@{->}[d]^{\mu_{\Zero}} & \\
		& & T\Zero \ar@{->}[dr]^{\clos{-}} & \\
		& & & \ClIU
	} \qquad \qquad
	\xymatrix{
		T T \ClIU \ar@{->}[rrr]^{T\alpha} \ar@{->}[dr]^{T T c} \ar@{->}[ddd]^{\mu_{\ClIU}} & & & T \ClIU \ar@{->}[ddd]^{\alpha}\\
		& T T T \Zero \ar@{->}[r]^{T\mu_{\Zero}} \ar@{->}[d]^{\mu_{T \Zero}} & T T \Zero \ar@{->}[ru]^{T \clos{-}} \ar@{->}[d]^{\mu_{\Zero}} & \\
		& T T \Zero \ar@{->}[r]^{\mu_{\Zero}} & T \Zero \ar@{->}[dr]^{\clos{-}} & \\
		T \ClIU \ar@{->}[rrr]^{\alpha} \ar@{->}[ur]^{T c} & & &  \ClIU
	}$$
\end{proof}

\begin{lem}
	If $\IEQ$ is reflexive, transitive and compositional, then for all $a,b \in \Trees{\Nat}$, $\Ias{a}{b} \implies a \sqsubseteq_{\alpha} b$.
\end{lem}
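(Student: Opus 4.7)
The plan is to prove the contrapositive style chain directly: given $\Ias{a}{b}$ and an arbitrary test $h: \Nat \to \ClIU$, I need to establish $\alpha(\Trees{h}(a)) \fimp \alpha(\Trees{h}(b))$. The key move is that the fixed choice function $c: \ClIU \to \Trees{\Zero}$ lets me turn any test $h$ into a substitution $c \circ h : \Nat \to \Trees{\Zero}$ that lands entirely in the base set where $\IU$ lives.

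First I would unfold $\alpha$ using its defining diagram. For any $t \in \Trees{\ClIU}$, $\alpha(t) = \clos{\mu_{\Zero}(\Trees{c}(t))}$. Taking $t = \Trees{h}(a)$ and using functoriality $\Trees{c} \circ \Trees{h} = \Trees{(c \circ h)}$, together with $f^* = \mu \circ \Trees{f}$, I get the identity
$$\alpha(\Trees{h}(a)) \;=\; \clos{\mu_{\Zero}(\Trees{(c \circ h)}(a))} \;=\; \clos{(c \circ h)^*(a)},$$
and similarly for $b$. So the desired inequality becomes $\clos{(c \circ h)^*(a)} \fimp \clos{(c \circ h)^*(b)}$.

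Next I would apply substitutionality of $\IEQ$, which holds because reflexivity together with compositionality imply the substitutional property (as noted in the excerpt right after property C). Instantiated at the substitution $c \circ h : \Nat \to \Trees{\Zero} \subseteq \Trees{\Nat}$ and the assumption $\Ias{a}{b}$, substitutionality yields $\Ias{(c \circ h)^*(a)}{(c \circ h)^*(b)}$. Since both sides lie in $\Trees{\Zero}$ (because $c \circ h$ takes values in $\Trees{\Zero}$ and the monad multiplication of a tree of such trees stays in $\Trees{\Zero}$), this inequation is in fact a member of $\IU$.

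Finally I would apply the definition of the order on the value space: $\clos{u} \fimp \clos{v}$ iff $u \IU v$. This converts the previous step into $\clos{(c \circ h)^*(a)} \fimp \clos{(c \circ h)^*(b)}$, which combined with the two $\alpha$-identities gives exactly $\alpha(\Trees{h}(a)) \fimp \alpha(\Trees{h}(b))$. Since $h$ was arbitrary, this yields $a \Arel b$. There is no real obstacle here; the only thing to be slightly careful about is verifying the functoriality/multiplication identity $\alpha(\Trees{h}(a)) = \clos{(c \circ h)^*(a)}$ cleanly, since all the remaining content is just chasing the definitions of $\fimp$ on $\ClIU$ and invoking substitutionality.
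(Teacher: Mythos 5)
Your proof is correct and follows essentially the same route as the paper's: take an arbitrary test $h:\Nat\to\ClIU$, precompose with the choice function to get $c\circ h:\Nat\to\Trees{\Zero}$, apply substitutionality (which the paper invokes as compositionality, noting it follows from reflexivity plus compositionality) to land the inequation in $\IU$, and then read off $\fimp$ on the value space via $\alpha(\Trees{h}(a))=\clos{(c\circ h)^*(a)}$. The only difference is that you spell out the unfolding of $\alpha$ through $\mu_{\Zero}\circ\Trees{c}$ more explicitly, which the paper leaves implicit.
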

\begin{proof}
	Assume that $\Ias{a}{b}$, and let $f : \Nat \to \ClIU$, then $c \circ f : \Nat \to \Trees{\Zero}$. 
	So by compositionality, ${(c \circ f)^*(a)}$ $\Ias{}{} \ (c \circ f)^*(b)$, hence $(c \circ f)^*(a) \ \IU \ (c \circ f)^*(b)$, so we conclude that:
	$\alpha(T(f)(a)) = {\clos{(c \circ f)^*(a)}} \fimp \clos{(c \circ f)^*(b)} = \alpha(T(f)(b))$.
\end{proof}

To prove that the constructed EM-algebra complements the algebraic relation, we use a relation lifting operation.
Given a relation $\mathcal{R} \subseteq X \times Y$, we define the lifted relation $\rell{\mathcal{R}} \subseteq \Trees{X} \times \Trees{Y}$ coinductively as follows:
\begin{enumerate}
	\item $\bot \ \rell{\mathcal{R}} \ t \implies t = \bot$, \qquad $\top \ \rell{\mathcal{R}} \ t \implies t = \top$
	\item $\leaf{x} \ \rell{\mathcal{R}} \ t \implies \exists y \in Y. \ t = \leaf{y} \wedge x \ \mathcal{R} \ y$.
	\item $\EfOp\langle m \mapsto t_m \rangle \ \rell{\mathcal{R}} \ r \implies \exists \{r_m\}_{m \in \Nat}. \ r = \EfOp\langle m \mapsto r_m \rangle \wedge (\forall m. \ t_m \ \rell{\mathcal{R}} \ r_m)$.
\end{enumerate}
This can be seen as the functorial lifting of relations, and has the following two properties:
\begin{itemize}
	\item $\forall x \in X, \forall y \in Y. \ x \ \mathcal{R} \ y \implies \eta(x) \ \rell{\mathcal{R}} \ \eta(y)$.
	\item $\forall a \in \Trees{\Trees{X}}, b \in \Trees{\Trees{Y}}. \ a \ \rell{\rell{\mathcal{R}}} \ b \implies \mu a \ \rell{\mathcal{R}} \ \mu b$.
\end{itemize}

\begin{pro}\label{pro:complement}
	If $\IEQ$ is reflexive, transitive, compositional and base-valued, then $\alpha$ complements $\IEQ$.
\end{pro}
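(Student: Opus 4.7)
The plan is straightforward since one direction is already done: the lemma immediately preceding this proposition establishes that $\Ias{a}{b} \implies a \Arel b$. So I only need to prove the converse, $a \Arel b \implies \Ias{a}{b}$, and for this the base-valuedness hypothesis is designed to do the heavy lifting.

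Assume $a \Arel b$. By base-valuedness it suffices to show that for every $f: \Nat \to \Trees{\Zero}$, we have $\Ias{f^*(a)}{f^*(b)}$. The natural candidate for a test on $\ClIU$ built from such an $f$ is $h := \clos{-} \circ f : \Nat \to \ClIU$. Plugging this into the assumption gives $\alpha(\Trees{h}(a)) \fimp \alpha(\Trees{h}(b))$ in $\ClIU$. The whole proof then reduces to identifying $\alpha(\Trees{h}(a))$ with $\clos{f^*(a)}$ (and likewise for $b$).

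This identification is exactly what Lemma~\ref{lem:comm} provides: applied to $\Trees{f}(a) \in \Trees{\Trees{\Zero}}$ we get
$$\alpha\bigl(\Trees{h}(a)\bigr) \;=\; \alpha\bigl(\Trees{\clos{-}}(\Trees{f}(a))\bigr) \;=\; \clos{\mu_{\Zero}(\Trees{f}(a))} \;=\; \clos{f^*(a)},$$
and symmetrically $\alpha(\Trees{h}(b)) = \clos{f^*(b)}$. Therefore $\clos{f^*(a)} \fimp \clos{f^*(b)}$, which unfolds by definition of $\fimp$ on $\ClIU$ to $f^*(a) \ \IU \ f^*(b)$, i.e.\ $\Ias{f^*(a)}{f^*(b)}$. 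Since $f$ was arbitrary, base-valuedness concludes $\Ias{a}{b}$.

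There is no real obstacle here; the construction of $\alpha$ via $\clos{-}$ and a choice function was tailored exactly so that Lemma~\ref{lem:comm} converts the semantic comparison $\alpha(\Trees{h}(-)) \fimp \alpha(\Trees{h}(-))$ under tests of the form $\clos{-}\circ f$ into the syntactic base-level comparison $f^*(-) \ \IU \ f^*(-)$. Base-valuedness is precisely the property that promotes such base-level comparisons back to the full algebraic relation, closing the loop.
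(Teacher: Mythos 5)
Your proof is correct, and both directions are handled properly: the forward implication by citing the preceding lemma, and the converse via base-valuedness with the test $h = \clos{-}\circ f$. This is the same overall strategy as the paper's proof. The one place where you genuinely diverge is in how you justify the identification $\alpha(\Trees{h}(a)) = \clos{f^*(a)}$. The paper unfolds $\alpha$ through the choice function, obtaining $\clos{\mu(\Trees{(c\circ\clos{-}\circ f)}(a))}$, and then has to relate $\mu(\Trees{(c\circ\clos{-}\circ f)}(a))$ back to $f^*(a)$; for this it introduces the relation lifting $\rell{\overline{\IU}}$, observes $t\ \overline{\IU}\ (c\circ\clos{-})(t)$ leafwise, and applies compositionality, finishing with a three-step transitivity chain. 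You instead apply Lemma~\ref{lem:comm} to the element $\Trees{f}(a) \in \Trees{\Trees{\Zero}}$ and use functoriality ($\Trees{(\clos{-}\circ f)} = \Trees{\clos{-}}\circ\Trees{f}$) to get the identity outright. Your route is cleaner: Lemma~\ref{lem:comm} already packages exactly the compositionality argument the paper redoes by hand, so you avoid both the detour through the relation lifting (which the paper introduces immediately before this proposition, apparently for this purpose) and the explicit $\overline{\IU}$-sandwich. The trade-off is negligible; the paper's version makes the role of the choice function $c$ visible, but nothing in the argument depends on seeing it.
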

\begin{proof}
	Assume that $a \sqsubseteq_{\alpha} b$, we prove that $\Ias{a}{b}$ using that $\IEQ$ is base-valued, in order to conclude using the previous lemma that $\alpha$ complements $\IEQ$.
	
	Let $f: \Nat \to \Trees{\Zero}$, then $(\clos{-} \circ f) : \Nat \to \ClIU$, hence $\alpha(T(\clos{-} \circ f)(a)) \fimp \alpha(T(\clos{-} \circ f)(b))$.
	So we get $[\mu (T(c \circ \clos{-} \circ f)(a))] \fimp [\mu (T(c \circ \clos{-} \circ f)(b))]$, and hence $\mu (T(c \circ \clos{-} \circ f)(a)) \ \IU \ \mu (T(c \circ \clos{-} \circ f)(b))$.
	Note that for all $t \in \Trees{\Zero}$, $t \ \overline{\IU} \ (c \circ \clos{-})(t)$.
	
	Hence for any $k \in \Trees{\Nat}$, $T(f)(k) \ \rell{\overline{\IU}} \ {T(c \circ \clos{-} \circ f)(k)}$, so by compositionality, we derive that $f^*(k) = \mu(T(f)(k)) \ \overline{\IU} \ \mu(T(c \circ \clos{-} \circ f)(k))$.
	We conclude that:
	$$f^*(a) \ \overline{\IU} \ \mu(T(c \circ \clos{-} \circ f)(a)) \ \IU \ \mu(T(c \circ \clos{-} \circ f)(b)) \ \overline{\IU} \ f^*(b) \enspace .$$
	So by transitivity, $f^*(a) \ \IU \ f^*(b)$.
	Hence by base-valuedness, $\Ias{a}{b}$.
\end{proof}

\subsection{EM-algebras for the examples}
We look at the examples of effects given in Subsection~\ref{sub:exa1}, and study what value spaces $\clos{\IU}$ and EM-algebras $\alpha: \Trees{\clos{\IU}} \to \clos{\IU}$ they generate.
For every example, the constructed EM-algebra satisfies the following rules:
$\alpha(\bot) = \clos{\bot}$, $\alpha(\top) = \clos{\top}$, $\alpha(\leaf{a}) = a$ for any $a \in \clos{\IU}$, and $\alpha(\bigvee_{n \in \Nat} t_n) = \bigvee \alpha(t_n)$ for any sequence $t_0 \leq_{\Trees{\Zero}} t_1 \leq_{\Trees{\Zero}} t_2 \leq_{\Trees{\Zero}} \dots$.
To complete the definition of the algebras in the following examples, we will specify their behaviour over algebraic effect operators. These \emph{local functions} together with the above properties uniquely characterise the morphism $\alpha$.

\begin{exa}[Nondeterminism]\label{exa:nond}
	
	The first example is nondeterminism with the binary operation $\ENor$.
	We study $\IU$ as derived from the induced equational theory $\IEQ(\Axioms)$, where $\Axioms$ is given by the equations from Example~\ref{exa:nond}.
	We get three elements of $\ClIU$, each denoting an equivalence class of $\IU$.
	\begin{enumerate}
		\item[$\bot$.] Any tree $t \in \Trees{\Zero}$ without a $\top$-leaf is equivalent to $\bot$.
		\item[$\Diamond$.] Any tree $t \in \Trees{\Zero}$ with at least one $\top$-leaf, which moreover either has a $\bot$ leaf or is infinite, is equivalent to $\ENor(\top,\bot)$.
		\item[$\top$.] Any tree $t \in \Trees{\Zero}$ which is finite and only has $\top$-leaves, is equivalent to $\top$.
	\end{enumerate}
	If we write $\Diamond$ for $\ENor(\top,\bot)$, the ordered set $\ClIU$ is given by $\{\clos{\bot} \fimp \clos{\Diamond} \fimp \clos{\top}\}$.
	
	We give some informal arguments for the above observations. Note first that:
	$$\bot \leq_{\Trees{\Zero}} \ENor(\top,\bot) \leq_{\Trees{\Zero}} \top ,$$
	which proves the given ordering.
	Now consider a finite tree $t \in \Trees{\Zero}$. If $t$ only has one type of leaf, consequetive application of the idempotency axiom reduces that tree to just that leaf. 
	If $t$ has both $\top$ and $\bot$ leaves, applying symmetry and transitivity can change the tree to one of the form $\ENor(l,r)$, where $l$ only has $\top$ leaves, and $r$ only $\bot$-leaves.
	So $t$ can be reduced to $\Diamond$ with idempotency.
	Now consider an infinite tree $t \in \Trees{\Zero}$, and let $t_0 \leq_{\Trees{\Zero}} t_1 \leq_{\Trees{\Zero}} t_2 \leq_{\Trees{\Zero}} \dots$ be a sequence of finite trees approximating $t$.
	If $t$ only has $\bot$ leaves, all finite approximations only have $\bot$ leaves, and each $t_i$ is equal to $\bot$. So by admissibility, $t \ = \ \bigvee_{n \in \Nat} t_n \ \overline{\IU} \ \bot$.
	If $t$ has at least one $\top$ leaf, all finite approximations have at least one $\bot$-leaf (by studying $\leq_{\Trees{\Zero}}$).
	Moreover, there must be an $n \in \Nat$ such that for all $m \geq n$, $t_m$ has a $\top$-leaf, so $t_m \ \overline{\IU} \ \Diamond$. Hence $t \ \overline{\IU} \ \bigvee_{n \in \Nat} t_n \ \overline{\IU} \ \Diamond$.
	We conclude that all trees must be equal to either $\bot$, $\top$ or $\Diamond$.
	
	The induced EM-algebra $\alpha: \Trees{\clos{\IU}} \to \clos{\IU}$ corresponds to the algebra operation given in \cite{bat}, with $\alpha(\ENor(a,a)) = a$ and $\alpha(\ENor(a,b)) = \clos{\Diamond}$ for any $a,b \in \clos{\IU}$ such that $a \neq b$.
\end{exa}

\begin{exa}[Probability]\label{exa:prob}
	We study $\IEQ(\Axioms)$ resulting from Example~\ref{exa:prob1}.
	Consider the real number interval $[0,1]$, and the function $\mathsf{P}: \Trees{\Zero} \to [0,1]$ satisfying the following rules $\mathsf{P}(\bot) = 0$, $\mathsf{P}(\top) = 1$, $\mathsf{P}(\EPor(l,r)) = (\mathsf{P}(l) + \mathsf{P}(r))/2$, and $\mathsf{P}(\bigvee_{n \in \Nat} t_n) = \lim_{n \to \infty}\mathsf{P}(t_n)$ for any ascending sequence $t_0 \leq_{\Trees{\Zero}} t_1 \leq_{\Trees{\Zero}} t_2 \leq_{\Trees{\Zero}} \dots$.
	For any two elements $a,b \in \Trees{\Zero}$, $a \ \IU \ b$ holds if and only if $\mathsf{P}(a) \leq \mathsf{P}(b)$. Moreover, $\mathsf{P}$ is surjective\footnote{Since rationals with power 2 denominators are dense in the real numbers.}.
	Hence $\ClIU$ can be expressed as $[0,1]$, where $\clos{-}: \Trees{\Zero} \to \ClIU$ is given by $\mathsf{P}$.
	The induced EM-algebra $\alpha : \Trees{[0,1]} \to [0,1]$ calculates the \emph{expected result}, where $\alpha(\EPor(a,b)) = (a+b)/2$.
\end{exa}

	\begin{exa}[Global Store]\label{exa:glob}
	The above two examples are standard in the literature, and do not explicitly use that element $\top$ is the top element of $\Trees{\Zero}$.
	In the case of global store however, this fact is important.
	We study $\IEQ = \IEQ(\Axioms)$ resulting from Example~\ref{exa:glob1}.
	
	Note that $\EUp{n}(\bot) \ \leq_{\Trees{\Nat}} \ \EUp{n}(\EUp{m}(\bot)) \ \IEQ \ \EUp{m}(\bot)$.
	Hence for any two natural numbers $n, m \in \Nat$, $\EUp{n}(\bot) \ \IEQ \ \EUp{m}(\bot)$ .
	So: 
	$$\bot \ \overline{\IEQ} \ \ELo(m \mapsto \bot) \ \overline{\IEQ} \ \ELo(m \mapsto \EUp{m}(\bot)) \ \overline{\IEQ} \ \ELo(m \mapsto \EUp{n}(\bot)) \ \overline{\IEQ} \ \EUp{n}(\bot) \enspace .$$
	With similar reasoning, $\EUp{n}(\top) \ \overline{\IEQ} \ \top$.
	We derive that for any tree $t \in \Trees{\Nat}$, there is a function $f: \Nat \to (\{\EUp{m}(\leaf{n}) \mid n,m \in \Nat\}\cup\{\bot,\top\})$ such that $t \ \overline{\IEQ} \ \ELo(m \mapsto f(m))$.
	
	Studying $\IU$ in particular, we see that each $a \in \Trees{\Zero}$ is equivalent to $\ELo(m \mapsto f(m))$ for some unique function $f_a: \Nat \to \{\bot,\top\}$.
	For $a,b \in \Trees{\Nat}$, $a \ \IU \ b$ holds if and only if for any $m \in \Nat$, ${(f_a(m) = \top)} \Rightarrow (f_b(m) = \top)$.
	Note moreover that for any function $f: \Nat \to \{\bot,\top\}$, there is an element $a \in \Trees{\Nat}$ such that $f_a = f$.
	So, $\ClIU$ can be expressed as the powerset $\Power{\Nat}$,
	where $\clos{a} := \{n \in \Nat \mid f_a(n) = \top\}$, and the order is given by inclusion.
	We see this powerset as the set of \emph{assertions} on the global state.
	
	The induced EM-algebra $\alpha: \Trees{\mathcal{P}(\Nat)} \to \mathcal{P}(\Nat)$ calculates the \emph{weakest precondition}: it gives the set of starting sates for which the tree reaches a leaf $\leaf{A}$ with a final state $s$ satisfying the assertion $A$.
	\end{exa}

	\begin{exa}[Exception catching]\label{exa:exep}

	We look at $\IU$ resulting from Example~\ref{exa:exep1}.
	Note that any element of $\Trees{\Zero}$ is, under $\IU$, equivalent to one of the following three types of trees: $\bot, \top$ or $\EEx{e}$ for some $e \in \Exc$. The elements are ordered in the following way:
	$$
		\forall e \in \Exc. \qquad \clos{\bot} \qquad \fimp \qquad
	 	\clos{\EEx{e}} \qquad \fimp \quad
		\clos{\top} \enspace .
	$$
	There is no ordering between $\EEx{e}$ and $\EEx{d}$ if $e \neq d$.
	The induced EM-algebra $\alpha$ is defined by $\alpha(\EEx{e}) = \clos{\EEx{e}}$.	
\end{exa}

	\begin{exa}[Input]
	We look at the input effect as given in Example~\ref{exa:inpu1}.
	We did not specify any axioms there, which can be motivated from the perspective of testing distinctions.
	A test of an input program would be checking a series of inputs until either: a) the program terminates successfully (marked by $\top$), or b) the program stops asking for inputs because of divergence (marked by $\bot$).
	Since there are no axioms, $\IU$ \ is given by $\leq_{\Trees{\Zero}}$, $\clos{\IU}$ is given by $\Trees{\Zero}$, and the constructed EM-algebra $\alpha$ is given by the function $\mu_{\Zero}: \Trees{\Trees{\Zero}} \to \Trees{\Zero}$.
	\end{exa}
	
	\begin{exa}[Cost]
	We look at $\IU$ for the cost effect given in Example~\ref{exa:cost1}. Considering the observations made there, we see that $\clos{\IU}$ is given by the set containing:
	$$\clos{\top} \quad \fpmi \quad \clos{\ETi(\top)} \quad \fpmi \quad \clos{\ETi(\ETi(\top))} \quad \fpmi \quad \clos{\ETi(\ETi(\ETi(\top)))} \quad \fpmi \quad \dots \quad \fpmi \quad \clos{\bot} \enspace .$$
	We can represent $\Arb = \ClIU$ as $\Nat_{\infty} = \Nat \cup \{\infty\}$ with reverse order. The constructed EM-algebra is $\alpha : \Trees{\Nat_{\infty}} \to \Nat_{\infty}$, where $\alpha(\ETi(t)) = \alpha(t) + 1$ and $\alpha$ applied to the infinite sequence of ticks gives $\infty$.
	\end{exa}
	
	\begin{exa}[Nondeterminism + Probability]\label{exa:comb}
		We look at the induced EM-algebra for the combination of effects given in Example~\ref{exa:comb1}, which coincides with a description from \cite{LopezSimpson}.
		The value space is given by $\clos{\IU} = \{(a,b) \in [0,1]^2 \mid a \leq b\}$, and the EM-algebra $\alpha$ by $\alpha(\ENor((a,b),(c,d))) = (\min(a,c),\max(b,d))$ and $\alpha(\EPor((a,b),(c,d))) = ((a+c)/2,(b+d)/2)$.
		E.g. $\alpha(\EPor(\top,\ENor(\top,\bot))) = (1/2,1)$.
	\end{exa}
	
	\section{Notes on logic and equivalence}\label{Sec:extra}
	In this section we will look at some more connections with the quantitative logic from \cite{Quantitative} used to specify behavioural equivalence.
	In particular, we will look at two topics discussed in that paper.
	
	\subsection{Relators}\label{sub:relator}
	In proving that the behavioural equivalence given by an EM-algebra is a congruence, the paper \cite{Quantitative} made essential use of a connection with applicative bisimilarity \cite{Abramsky90}.
	Applicative bisimilarity gives us a technique for proving equivalence between higher-order functional programs.
	In \cite{Relational}, applicative bisimilarity is defined for effectful programs using relators.
	We will briefly look at how we can derive such a relator from algebraic relations and EM-algebras.
	
	\begin{defi}[\cite{Levy11,Relational}]
		A \emph{relator} $\Gamma$ for a monad $M$ is a family of operations, giving for each pair of sets $X$, $Y$, a function $\Gamma_{X,Y}$ sending relations $\mathcal{R} \subseteq X \times Y$ to relations $\Gamma(\mathcal{R}) \subseteq MX \times MY$, such that:
		\begin{align*}
			& (1) \ =_{MX} \subseteq \Gamma(=_X),  \qquad (2) \ \Gamma(\mathcal{R})\Gamma(\mathcal{S}) \subseteq \Gamma(\mathcal{RS}), \qquad (3) \ \mathcal{R} \subseteq \mathcal{S} \Rightarrow \Gamma(\mathcal{R}) \subseteq \Gamma(\mathcal{S}),\\
			& (4) \ \forall f: X \to Z, g: Y \to W, \mathcal{R} \subseteq Z \times W. \ \Gamma(\{(x,y) \mid f(x) \ \mathcal{R} \ g(y)\}) = \{(a,b) \mid M(f)(a) \Gamma(\mathcal{R}) M(g)(b)\}
		\end{align*}
	\end{defi}

	\noindent
	For instance, $T\langle - \rangle$ defined at the end of Subsection \ref{sec:fromto} is a relator. We give two ways of constructing a relator. 
	
	Firstly, given a complete algebraic relation $\IEQ$, we define the operation $\Gamma^{\IEQ}$ for the monad $\Trees{(-)}$ as follows: For any two sets $X$ and $Y$, relation $\mathcal{R} \subseteq X \times Y$, and elements $a \in \Trees{X}$ and $b \in \Trees{Y}$, $a \ \Gamma_{X,Y}^{\IEQ}(\mathcal{R}) \ b$ holds if and only if for any two functions $f: X \to \Trees{\Zero}, g: Y \to \Trees{\Zero}$: 
	$$ (\forall x \in X, y \in Y, x \ \mathcal{R} \ y \Rightarrow \Ias{f(x)}{g(y)}) \quad \implies \quad \Ias{f^*(a)}{g^*(b)} \enspace .$$
	
	Secondly, given an Eilenberg-Moore algebra $\alpha: T\Arb \to \Arb$,
	we define the operation $\Gamma^{\alpha}$ for the monad $\Trees{(-)}$ as follows: For any two sets $X$ and $Y$, relation $\mathcal{R} \subseteq X \times Y$, and elements $a \in \Trees{X}$ and $b \in \Trees{Y}$, $a \ \Gamma_{X,Y}^{\alpha}(\mathcal{R}) \ b$ holds if and only if for any two functions $f: X \to \Arb, g: Y \to \Arb$:
	$$(\forall x \in X, y \in Y, x \ \mathcal{R} \ y \Rightarrow f(x) \fimp g(y)) \quad \implies \quad \alpha(\Trees{f}(a)) \fimp \alpha(\Trees{g}(b)) \enspace .$$
	
	$\Gamma^{\alpha} = \Gamma^{\IEQ}$ if $\IEQ$ is complete. Moreover, $\IEQ$ is base-valued precisely when $\Gamma^{\IEQ}(\textit{id}_{\Nat}) = \IEQ$.
	\begin{lem}\label{lem:relator}
		If $\Arb$ is a complete lattice, and $\alpha: T\Arb \to \Arb$ is a monotone EM-algebra, then $\Gamma^{\alpha}$ is a relator.
	\end{lem}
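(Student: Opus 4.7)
The plan is to verify the four relator axioms for $\Gamma^{\alpha}$ directly from the definition, exploiting monotonicity of $\alpha$ and completeness of $\Arb$ to lift pointwise orderings of functions $X \to \Arb$ through arbitrary maps.

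For axiom (1), I would take $a \in TX$ and $f, g : X \to \Arb$ with $f(x) \fimp g(x)$ for every $x$. By the coinductive description of $\leq_{T\Arb}$ (rules 1--3), $Tf(a) \leq_{T\Arb} Tg(a)$, and then monotonicity of $\alpha$ yields $\alpha(Tf(a)) \fimp \alpha(Tg(a))$, which is exactly $a \ \Gamma^{\alpha}(=_X) \ a$. Axiom (3) is immediate: if $\mathcal{R} \subseteq \mathcal{S}$, then any pair $(f,g)$ witnessing the premise of the defining implication for $\Gamma^{\alpha}(\mathcal{S})$ also witnesses it for $\Gamma^{\alpha}(\mathcal{R})$, so $\Gamma^{\alpha}(\mathcal{R}) \subseteq \Gamma^{\alpha}(\mathcal{S})$.

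Axiom (4) is where the completeness of $\Arb$ first becomes essential. The forward inclusion is a straightforward functoriality calculation: from $a \ \Gamma^{\alpha}(\mathcal{R}') \ b$, where $\mathcal{R}' = \{(x,y) \mid f(x) \mathcal{R} g(y)\}$, one obtains the required inequality for test pairs $(h,k)$ on $(Z,W)$ by composing with $f, g$ and using $T(h \circ f) = Th \circ Tf$. The reverse inclusion is the main obstacle: given $Tf(a) \ \Gamma^{\alpha}(\mathcal{R}) \ Tg(b)$ and arbitrary test pair $(h,k)$ on $(X,Y)$ with $x \mathcal{R}' y \Rightarrow h(x) \fimp k(y)$, I would define $\tilde{h}(z) := \bigvee\{h(x) \mid f(x)=z\}$ and $\tilde{k}(w) := \bigwedge\{k(y) \mid g(y)=w\}$, using the complete lattice structure (with empty joins/meets as $\bot$/$\top$ of $\Arb$). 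One then checks that $z \mathcal{R} w$ implies $\tilde{h}(z) \fimp \tilde{k}(w)$, and that $\tilde{h}(f(x)) \fpmi h(x)$, $\tilde{k}(g(y)) \fimp k(y)$ pointwise, so that monotonicity of $\alpha$ sandwiches $\alpha(Th(a)) \fimp \alpha(T\tilde h(Tf(a))) \fimp \alpha(T\tilde k(Tg(b))) \fimp \alpha(Tk(b))$, as required.

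Axiom (2) uses the same lifting idea to build an intermediate test function. Given $a \ \Gamma^{\alpha}(\mathcal{R}) \ b$, $b \ \Gamma^{\alpha}(\mathcal{S}) \ c$, and a test pair $(f,h)$ with $x\mathcal{RS}z \Rightarrow f(x)\fimp h(z)$, I would set $g(y) := \bigvee\{f(x) \mid x \mathcal{R} y\}$ in $\Arb$. Then $x \mathcal{R} y$ gives $f(x) \fimp g(y)$ trivially, and $y \mathcal{S} z$ gives $g(y) \fimp h(z)$ because every $x$ with $x \mathcal{R} y$ satisfies $x \mathcal{RS} z$ (and the empty supremum is $\bot$). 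Applying the two hypotheses in turn gives $\alpha(Tf(a)) \fimp \alpha(Tg(b)) \fimp \alpha(Th(c))$. The principal technical subtlety throughout is the bookkeeping of empty preimages (handled by $\bot$ and $\top$ of the complete lattice) and checking that the coinductive tree order interacts cleanly with pointwise orderings of leaf-labelling functions so that monotonicity of $\alpha$ can be applied as needed.
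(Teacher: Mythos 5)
The paper states Lemma~\ref{lem:relator} without proof (it only remarks that the hypotheses hold in the examples), so there is no official argument to compare against; judged on its own, your proof is correct and complete in all essentials. Axioms (1) and (3) are handled exactly as one would expect, using that pointwise $\fimp$-ordered leaf labellings yield $\leq_{T\Arb}$-ordered trees of the same shape, so monotonicity of $\alpha$ applies. Your constructions for (2) and (4) are the right ones and are where the complete-lattice hypothesis genuinely enters: the join $g(y) = \bigvee\{f(x) \mid x\,\mathcal{R}\,y\}$ for composition, and the preimage join/meet $\tilde h, \tilde k$ for the reverse inclusion of (4), with empty preimages absorbed by $\bot$ and $\top$. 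Two small points of precision: in the displayed chain for (4), the middle inequality $\alpha(T\tilde h(Tf(a))) \fimp \alpha(T\tilde k(Tg(b)))$ comes from the hypothesis $Tf(a)\ \Gamma^{\alpha}(\mathcal{R})\ Tg(b)$ applied to the test pair $(\tilde h,\tilde k)$, not from monotonicity (only the outer two inequalities do); and both (2) and (4) implicitly use transitivity of $\fimp$ on $\Arb$ to chain the inequalities, which is harmless but worth stating. Neither affects correctness.
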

	This lemma holds for all of the given examples.
	If moreover $\alpha$ is a morphism in the category of $\omega$-cpos, then $\Gamma^{\alpha}$ satisfies the additional properties required in \cite{Relational} in order to use Howe's method and prove that applicative bisimilarity is compatible.
	As such, $\Arb$ is required to be a complete lattice by the theory developed in \cite{Quantitative}. This is the case in all our examples.
	
	\subsection{Involutions}
	In \cite{Quantitative}, a quantitative logic is defined with the intention to specify a behavioural equivalence. 
	One optional ingredient in that definition is the notion of negation, an involution on the carrier set $\Arb$ of the Eilenberg-Moore algebra.
	Given how trees are formulated in this paper, with the addition of a top element $\top$, there is a natural candidate for an involution function on $\Trees{\Zero}$.
	
	A function $f: X \to X$ on a preorder $X$ is an involution if, a) for all $x \in X$, $f(f(x)) = x$, and b) for all $x,y \in X$,  $x \leq y$ holds if and only if $f(y) \leq f(x)$.
	Note in particular that $\Nat$ with the discrete ordering has the identity function as an involution, and $\Zero$ has a trivial involution.
	Given a preorder $X$ with involution $f$, we let $\neg(-): \Trees{X} \to \Trees{X}$ be the function that takes a tree $t \in \Trees{\Zero}$, and produces a tree of the same shape by doing the following two alterations:
	\begin{itemize} 
		\item Replace each leaf of $t$ labelled $\top$ with a leaf labelled $\bot$, and vice versa.
		\item Replace each leaf of $t$ labelled $x \in X$, with a leaf labelled $f(x)$.
	\end{itemize}
	Note that $\neg(-)$ gives an involution on $\Trees{X}$ with respect to the tree ordering $\leq_{\Trees{X}}$.
	In particular, $\Trees{\Zero}$ and $\Trees{\Nat}$ have an involution.
	
	In order for $\neg$ to induce an involution on $\clos{\IU}$, we need $\IEQ$ to preserve involutions:
	\begin{defi}
		$\IEQ \subseteq (\Trees{\Nat})^2$ \emph{preserves involutions} if $\forall a,b \in \Trees{\Nat}. \ \Ias{a}{b} \Leftrightarrow \Ias{\neg b}{\neg a}$.
	\end{defi}
	\noindent
	We call the inequation $\Ias{\neg b}{\neg a}$ the \emph{involution-complement} of $\Ias{a}{b}$.
	If $\IEQ$ preserves involutions, then the function $f: \clos{\IU} \to \clos{\IU}$ given by $\clos{t} \mapsto \clos{\neg t}$ is well-defined and hence gives an involution on $\clos{\IU}$.
	Unfortunately though, for all but the input example, $\neg$ does not give an involution on $\clos{\IU}$.
	
	For the cost example, the $\neg$ does not give an involution because $\Nat_{\infty}$ simply does not have an involution. Remember the asserted axiom $(\ETi(x) \leq x) \in \Axioms \subseteq \IEQ$, and note that its complement $(x \leq \ETi(x))$ is not contained in $\IEQ$.  
	Hence, in order to get a proper involution, we need the $\IEQ$ to preserve involutions (at least on finite trees). This is however impossible for this example.
	
	For nondeterminism too, $\neg$ does not give an involution on $\clos{\IU}$. This is curious, since its axioms are closed under involution-complement. Moreover, $\clos{\IU} = \{\bot,\Diamond,\top\}$ does have an obvious candidate for an involution.
	The problem is our treatment of infinite trees.
	Consider the tree $t = \mu x.\ENor(x,x)$, the infinite binary tree without any leaves. By admissibility, $t$ is equal to $\bot$. However, $\neg t = t$ and $\neg \bot = \top$, hence $\IEQ$ does not preserve involutions.
	
	The same problem occurs for examples like probability and global store. However, as in the case of nondeterminism, $\neg$ does give a natural notion of involution on finite trees:
	\begin{itemize}
		\item For probability, we get $\neg: [0,1] \to [0,1]$ sending $p$ to $1-p$.
		\item For global store, we get $\neg: \mathcal{P}(\Nat) \to \mathcal{P}(\Nat)$ sending $S$ to its complement $\Nat-S$.
	\end{itemize} 
	It may be possible to extend this definition to infinite trees, circumventing the above mentioned issue, using approximations to change how $\neg$ operates on infinite trees. This is a potential subject for future research. 

	\section{Extracting Boolean predicates}\label{Sec:Bool}
	In \cite{modal_journal}, effectful behaviour is described using Boolean modalities. In this section, we will see how we can extract a collection of Boolean modalities from an EM-algebra, and study when this collection correctly characterises that EM-algebra and its induced behavioural equivalence.
	
	Here we define a \emph{Boolean modality} $o$ on an effect signature $\Sigma$ in a slightly different way from \cite{modal_journal}, in order for it to fit more naturally in the current framework. 
	We specify $o$ using a map $\denote{o} : \Trees{\Bool} \to \Bool$, where $\Bool$ are the \emph{Booleans}: the two element set $\{\False,\True\}$ with an order `$\Rrightarrow$' defined by  $\False \Rrightarrow \True$. 
	For any predicate $P : X \to \Bool$ and modality $o$ we define a predicate $o(P): \Trees{X} \to \Bool$ denoted by $\denote{o} \circ \Trees{P}$.
	Given a set of modalities $\Obser$ we define a relation $\basrel \ \subseteq \Trees{\Nat}$ as follows:
	$$l \basrel r \quad :\iff \quad \forall v: \Nat \to \Bool, \forall o \in \Obser. \ \ o(v)(l) \Rrightarrow o(v)(r) \enspace .$$
	
	\noindent
	We construct a set of Boolean modalities from an EM-algebra $\alpha$ using two ingredients:
	\begin{itemize}
	\item Suppose $X$ is a set with an $\omega$-complete preorder $\leq$. 
	A predicate $P : X \to \Bool$ is \emph{open} with respect to $\leq$ if for any ascending sequence of elements $a_0 \leq a_1 \leq a_2 \leq \dots$ such that $\forall n \in \Nat. P(a_n) = \False$, $P(\bigvee_{n \in \Nat} a_n) = \False$.
	For instance, in case that $X = [0,1]$, the open predicates are given by tests $P_r(a) = \True \iff a > r$ where $r$ is some real number.
	\item Given a preorder $X$ with minimum element $\bot \in X$, and given an element $v \in X$, we define $v' : \Bool \to X$ to be the function sending $\True$ to $v$ and $\False$ to $\bot$. We then define $\widehat{v} := \Trees{v'} : \Trees{\Bool} \to \Trees{X}$.
	\end{itemize}

	\noindent
	Consider an EM-algebra $\alpha: T\Arb \to \Arb$, where $\Arb$ is a complete lattice with order $\fimp$.
	\begin{defi}
	Given an open predicate $P : \Arb \to \Bool$ and $v \in \Arb$, we define a Boolean modality $(P, v)$ with denotation $\denote{(P,v)} = (P \circ \alpha \circ \widehat{v}) : \Trees{\Bool} \to \Bool$. We define $\Obser_{\alpha} := \{(P,v) \mid v \in \Arb, P : \Arb \to \Bool \ \text{open w.r.t.} \ \fimp\}$.
	$$\xymatrix{
		\Trees{\Bool} \ar@{.>}[r]_{\denote{(P, v)}} \ar@{->}[d]_{\widehat{v}} & \Bool \\
		\Trees{\Arb} \ar@{->}[r]_{\alpha} & \Arb \ar@{->}[u]_{P}
	}$$
	\end{defi}

	Note that the EM-algebra $\alpha$ defined in~\ref{sec:fromto} factors through the function $\Trees{c}: \Trees{\Arb} \to \Trees{\Trees{\Zero}}$, choosing for each element $a \in \Arb$ an appropriate element $c(a) \in \Trees{\Zero}$.
	In this case, we see that the function $\Trees{c} \circ \widehat{v}$ assigns to $\True$ a \emph{continuation} $c(v)$. 
	Secondly, suppose $\IEQ$ is complete (hence admissible). 
	Then $\IU$ is an $\omega$-complete preorder, and the function $P \circ [-] : \Trees{\Zero} \to \Bool$ is a predicate on $\Trees{\Zero}$ which is open with respect to the order $\IU$. This motivates the following alternative definition.

	\begin{defi}
	Given a predicate $P : \Trees{\Zero} \to \Bool$ open w.r.t. $\IU$, and continuation $t \in \Trees{\Zero}$, we define a Boolean modality $(P, t)$ with denotation $\denote{(P,t)} = (P \circ \mu_{\Zero} \circ \widehat{t}) : \Trees{\Bool} \to \Bool$. We define $\Obser_{\IEQ} := \{(P,t) \mid t \in \Trees{\Zero}, P : \Trees{\Zero} \to \Bool \ \text{open w.r.t. the order} \ \IU\}$.
	$$\xymatrix{
		\Trees{\Bool} \ar@{.>}[r]_{\denote{(P, t)}} \ar@{->}[d]_{\widehat{t}} & \Bool \\
		\Trees{\Trees{\Zero}} \ar@{->}[r]_{\mu_{\Zero}} & \Trees{\Zero} \ar@{->}[u]_{P}
	}$$
	\end{defi}

	Given the discussion from before, the following lemma is evident.
	\begin{lem}\label{lem:coincidence}
		The sets of functions $\{\denote{o} \mid o \in \Obser_{\alpha_{\IEQ}}\}$ and $\{\denote{o} \mid o \in \Obser_{\IEQ}\}$ coincide.
	\end{lem}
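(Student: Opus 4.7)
The plan is to prove the two sets of functions coincide by exhibiting explicit translations in both directions. The key tool is the factorization $\alpha = \clos{-} \circ \mu_{\Zero} \circ \Trees{c}$ from Subsection~\ref{sec:fromto}, together with Lemma~\ref{lem:comm}, namely $\alpha \circ \Trees{\clos{-}} = \clos{-} \circ \mu_{\Zero}$. Assuming without loss of generality that the choice function satisfies $c(\clos{\bot}) = \bot$, the crucial bridging identity is that for any $t \in \Trees{\Zero}$ with $v := \clos{t}$, the Boolean-tree substitution maps satisfy $\widehat{v} = \Trees{\clos{-}} \circ \widehat{t}$ (both sides send $\True$ to $v$ and $\False$ to $\clos{\bot}$). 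Composing with $\alpha$ on the left and using Lemma~\ref{lem:comm} yields the key equation $\alpha \circ \widehat{v} = \clos{-} \circ \mu_{\Zero} \circ \widehat{t}$.

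For the forward inclusion $\{\denote{o} \mid o \in \Obser_{\alpha_{\IEQ}}\} \subseteq \{\denote{o} \mid o \in \Obser_{\IEQ}\}$: given $(P,v) \in \Obser_{\alpha_{\IEQ}}$, I would take $t := c(v)$ and set $P' := P \circ \clos{-} : \Trees{\Zero} \to \Bool$. Pre-composing the bridging identity with $P$ directly gives $\denote{(P,v)} = P \circ \clos{-} \circ \mu_{\Zero} \circ \widehat{t} = \denote{(P',t)}$. To verify that $P'$ is open with respect to $\IU$, note that any $\IU$-ascending sequence $\{a_n\}$ projects under $\clos{-}$ to a $\fimp$-ascending sequence $\{\clos{a_n}\}$ whose supremum in $\Arb$ is the class of the supremum in $\Trees{\Zero}$ (using admissibility of $\IEQ$), so openness of $P$ with respect to $\fimp$ transfers through.

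For the reverse inclusion: given $(P,t) \in \Obser_{\IEQ}$, I would set $v := \clos{t}$ and define $P' := P \circ c : \Arb \to \Bool$. The same bridging identity rewrites $\denote{(P',v)}(s)$ as $P(c(\clos{\mu_{\Zero}(\widehat{t}(s))}))$; to identify this with $\denote{(P,t)}(s) = P(\mu_{\Zero}(\widehat{t}(s)))$, one uses that $c(\clos{u})$ and $u$ lie in the same $\overline{\IU}$-class, so $P$ should take the same value on them. Openness of $P'$ with respect to $\fimp$ is then checked symmetrically, by lifting any $\fimp$-chain $\{w_n\}$ to the $\IU$-chain $\{c(w_n)\}$, applying openness of $P$, and observing that $c(\bigvee w_n)$ and $\bigvee c(w_n)$ are again $\overline{\IU}$-equivalent.

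The main obstacle is the $\overline{\IU}$-invariance required of open predicates on $\Trees{\Zero}$ in the backward step. This invariance is automatic if \emph{open} is read as \emph{monotone Scott-open} in the standard topological sense (including upward-closure along $\IU$), since $\overline{\IU}$-equivalent elements dominate each other in both directions. Under that reading all verifications reduce to routine diagram chases anchored on Lemma~\ref{lem:comm}; without it, one would need to either strengthen the definition of $\Obser_{\IEQ}$ or pass to the quotient of predicates modulo $\overline{\IU}$ before comparing denotations, which is presumably why the author regards the lemma as evident from the preceding discussion.
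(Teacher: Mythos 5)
Your proof is correct and fills in exactly the argument the paper leaves implicit (the paper offers no proof, declaring the lemma evident from the preceding discussion): the factorization $\alpha = \clos{-} \circ \mu_{\Zero} \circ \Trees{c}$, the correspondence $v = \clos{t}$ and $t = c(v)$, and the predicate translations $P \mapsto P \circ \clos{-}$ and $P \mapsto P \circ c$ are precisely what the paragraph preceding the lemma gestures at. The $\overline{\IU}$-invariance issue you flag in the backward direction is a genuine subtlety the paper glosses over, but it is harmless under the paper's implicit convention that open predicates preserve order (used explicitly in the proof of Lemma~\ref{lem:bool_char}), and with that reading your argument closes.
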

	In particular, this means that $\basrel$ induced by $\Obser_{\IEQ}$ and $\basrel$ induced by $\Obser_{\alpha}$ are identical.
	Let us look at the order $\basrel$ given by $\Obser = \Obser_{\alpha}$.
	
	\begin{lem}\label{lem:bool_char}
		For any $l, r \in \Trees{\Nat}$:
	$$ l \basrel r \quad \iff \quad \forall v \in \Arb, f : \Nat \to \{\bigvee \emptyset, v\}. \ \alpha(\Trees{f}(l)) \fimp \alpha(\Trees{f}(r)) \enspace .$$
	\end{lem}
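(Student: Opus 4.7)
The plan is to unfold the definition of $l \basrel r$ into a concrete form that can be compared directly to the right-hand side, and then appeal to the standard fact that the specialization order of the Scott topology on a complete lattice recovers the original order. First, I would rewrite $\denote{(P,w)}(\Trees{u}(l))$ for a test $o = (P, w) \in \Obser_{\alpha}$ and $u : \Nat \to \Bool$. Since $\widehat{w} = \Trees{w'}$, functoriality of $\Trees{(-)}$ gives $\widehat{w}(\Trees{u}(l)) = \Trees{(w' \circ u)}(l)$. As $u$ varies, the composite $w' \circ u$ ranges over exactly the functions $\Nat \to \{\bot, w\}$, since any such $f$ arises from $u(n) := \True \Leftrightarrow f(n) = w$. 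Hence $l \basrel r$ unfolds to:
$$\forall w \in \Arb,\ \forall f : \Nat \to \{\bot, w\},\ \forall P : \Arb \to \Bool \ \text{open}.\ \ P(\alpha(\Trees{f}(l))) \Rrightarrow P(\alpha(\Trees{f}(r))).$$

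For the $(\Leftarrow)$ direction, I would use that every open predicate $P$ on $\Arb$ is monotone: its $\True$-set is upward closed in the Scott-topology sense. Given the hypothesis $\alpha(\Trees{f}(l)) \fimp \alpha(\Trees{f}(r))$, monotonicity immediately yields the Boolean implication, for any $w$, $f$, and open $P$.

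For the $(\Rightarrow)$ direction, assume the unfolded form above. Fix $v \in \Arb$ and $f : \Nat \to \{\bot, v\}$, and set $a := \alpha(\Trees{f}(l))$, $b := \alpha(\Trees{f}(r))$. I would show $a \fimp b$ by contradiction. Define $P_b : \Arb \to \Bool$ by $P_b(x) := \True \Leftrightarrow x \not\fimp b$. Because $\Arb$ is a complete lattice, $\downarrow b$ is downward closed and closed under ascending suprema (if every $a_n \fimp b$, then $\bigvee_n a_n \fimp b$), so $P_b$ is open. If $a \not\fimp b$, then $P_b(a) = \True$ while $P_b(b) = \False$, contradicting the hypothesis applied to $w := v$, $f$, and $P := P_b$.

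The main obstacle is just the bookkeeping in the first step, translating between the two layers of data, namely Boolean predicates $u : \Nat \to \Bool$ paired with continuations $w \in \Arb$ on the modality side, and arbitrary functions $f : \Nat \to \{\bot, w\} \subseteq \Arb$ on the algebra side. Once this unfolding is established and the standard Scott-topological facts are in place, both directions reduce to routine observations and Lemma~\ref{lem:coincidence} ensures that the same analysis applies whether one starts from $\Obser_{\alpha}$ or $\Obser_{\IEQ}$.
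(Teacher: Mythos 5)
Your proposal is correct and follows essentially the same route as the paper's proof: unfold $\basrel$ via $\widehat{w}\circ\Trees{u} = \Trees{(w'\circ u)}$ so that the tests range over functions $\Nat \to \{\bigvee\emptyset, v\}$, then use the characterization $a \fimp b \iff P(a) \Rrightarrow P(b)$ for all open $P$, established by monotonicity of open predicates in one direction and the open predicate $P_b(c) = \False \iff c \fimp b$ in the other. The extra detail you supply (closure of $\downarrow b$ under ascending suprema, the explicit bijection between $u : \Nat \to \Bool$ and $f : \Nat \to \{\bot, v\}$) is exactly what the paper leaves implicit.
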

	\begin{proof}
		Note that for $a, b \in \Arb$, $a \fimp b$ if and only if $P(a) \Rrightarrow P(b)$ for any open predicate $P: \Arb \to \Bool$. This is because open predicates preserve order, and the predicate $P_b : \Arb \to \Bool$ defined by $P_b(c) = \False \iff (c \fimp b)$ is an open predicate. The result follows by unfolding the definition of $\basrel$.
	\end{proof}

	Given the above lemma, not all complete base-valued algebraic relations on $\Trees{\Nat}$ can be expressed as $\basrel$ for some set of Boolean modalities $\Obser$. 
	As such, not all equational theories can be properly represented using Boolean modalities. 
	More concretely, it is not always the case that $\basrel$ is equal to $\Ias{}{}$.
	In such cases, we should stick to the EM-algebra as a tool for specifying distinctions for the algebraic relation.
	Later on, we will investigate this issue with Boolean modalities for two specific examples.
	First, we look at two examples for which this is not a problem.

	\subsection{Examples}
	
	\begin{exa}[Global store]
	We will start with the example of global store from Examples \ref{exa:glob1} and \ref{exa:glob}, as it most clearly conveys the use of the two ingredients for constructing Boolean modalities.
	Firstly, each open predicate $P : \mathcal{P}(\Nat) \to \Bool$, if not constantly $\False$, is specified by a finite set $f \subset \Nat$ where $P(S) = \True \iff f \subseteq S$.
	Given a value $v \in \Arb = \mathcal{P}(\Nat)$, we write $(f \mapsto v)$ for the modality $(P,v)$, where $P$ is specified by the finite set $f$.
	The following statements hold:
	\begin{itemize}
		\item $\denote{(f \mapsto v)}(\leaf{\False}) = \False$.
		\item $\denote{(f \mapsto v)}(\leaf{\True}) = \True \quad \iff \quad f \subseteq v$.
		\item $\denote{(f \mapsto v)}(\EUp{m}(t)) = \denote{(\{m\} \mapsto v)}(t)$.
		\item $\denote{(f \mapsto v)}(\ELo(m \mapsto t_m)) = \True \quad \iff \quad \forall n \in f. \ \denote{(\{n\} \mapsto v)}(t_n) = \True$.
	\end{itemize}
	We interpret $(f \mapsto v)$ as a test on trees, using a precondition and postcondition: for any starting state $n$ satisfying assertion $f$, the computation terminates with leaf $\True$ and a final state satisfying assertion $v$.
	This naturally generalises to cases where there is a finite number of global store locations.
	\end{exa}
	
	\begin{exa}[Probability]
	In the case of probability from Examples \ref{exa:prob1} and \ref{exa:prob}, the second ingredient for constructing Boolean modalities is redundant. However, this case does illustrate the importance of using open predicates, to ensure that the modalities are \emph{Scott-open} (a requirement from \cite{modal,modal_journal}).
	As noted before, open predicates on $[0,1]$ are given by tests $P_r: [0,1] \to \Bool$ such that $P_r(a) = \True \iff a > r$, where $r$ is some real number.
	For any $v \in [0,1]$:
	\begin{itemize}
		\item $\denote{(P_r,v)}(\leaf{\False}) = \True \quad \iff \quad r < 0$.
		\item $\denote{(P_r,v)}(\leaf{\True}) = \True \quad \iff \quad r < v$.
		\item $\denote{(P_r,v)}(\EPor(l,r)) = \True \quad \iff \quad \exists a,b \in \Real. \ (a+b)/2 \geq r \ \wedge \ \denote{(P_a,v)}(l) = \True \ \wedge \ \denote{(P_b,v)}(r) = \True$.
	\end{itemize}
	Note that $(P_r,v)(t) = (P_{r/v},1)(t)$ if $v \neq 0$, and $(P_r,0)(t) = \True \ \iff \ r < 0$.
	\end{exa}
	
	\subsection{Representability}
	We will now study which algebraic relations $\IEQ$ can be specified by a set of Boolean modalities.
	
	\begin{defi}
		$\IEQ$ is \emph{single-valued} if for any $a,b \in \Trees{\Nat}$:
		$$(\forall f: \Nat \to \{\bot,\leaf{0}\} \subseteq \Trees{\Nat}. \ \ \Ias{f^*(a)}{f^*(b)}) \ \implies \ \Ias{a}{b} \enspace . $$
	\end{defi}

	\begin{lem}
		Let $\IEQ$ be a complete base-valued algebraic relation and let $\Obser := \Obser_{\IEQ}$, then $\IEQ$ is single-valued if and only if $\sqsubseteq_{\Obser} = \Ias{}{}$.
	\end{lem}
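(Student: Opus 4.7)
The plan is to translate both sides into statements about restricted substitutions $\Nat \to \{\bot, t\}$ (on the Boolean side) and $\Nat \to \{\bot, \leaf{0}\}$ (on the single-valuedness side), then bridge the two using substitutionality and base-valuedness of $\IEQ$.

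I would begin by unfolding $\basrel$ concretely for $\Obser = \Obser_{\IEQ}$. Working through the definition of the modality $(P, t)$, the test $\denote{(P,t)} \circ \Trees{v}$ applied to $a \in \Trees{\Nat}$ equals $P(h^*(a))$, where $h : \Nat \to \{\bot, t\} \subseteq \Trees{\Zero}$ sends $n \mapsto t$ if $v(n) = \True$ and $n \mapsto \bot$ otherwise. By the argument used in Lemma \ref{lem:bool_char}, the open predicates $P : \Trees{\Zero} \to \Bool$ jointly determine the order $\IU$, so
$$a \basrel b \iff \forall t \in \Trees{\Zero}, \forall h: \Nat \to \{\bot, t\}. \ h^*(a) \ \IU \ h^*(b) \enspace .$$

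For the $(\Rightarrow)$ direction, assume $\IEQ$ is single-valued. The inclusion $\Ias{}{} \subseteq \basrel$ holds because, by Proposition \ref{pro:complement}, $\alpha$ complements $\IEQ$ and each Boolean modality factors through $\alpha$ followed by a monotone open predicate. For the reverse, given $a \basrel b$, I would show $\Ias{f^*(a)}{f^*(b)}$ for every $f: \Nat \to \{\bot, \leaf{0}\}$ and then invoke single-valuedness. By base-valuedness, it suffices to check $\Ias{g^*(f^*(a))}{g^*(f^*(b))}$ for every $g: \Nat \to \Trees{\Zero}$; but then $g^* \circ f$ takes values in $\{\bot, g(0)\}$, so the unfolded characterisation of $a \basrel b$ applied with $t = g(0)$ and $h = g^* \circ f$ yields precisely this.

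For the $(\Leftarrow)$ direction, assume $\basrel = \Ias{}{}$ and suppose $\Ias{f^*(a)}{f^*(b)}$ for every $f : \Nat \to \{\bot, \leaf{0}\} \subseteq \Trees{\Nat}$. It suffices to establish $a \basrel b$. Given any $t \in \Trees{\Zero}$ and $h : \Nat \to \{\bot, t\}$, I would define $f : \Nat \to \{\bot, \leaf{0}\}$ by $f(n) = \leaf{0}$ iff $h(n) = t$, and pick $g : \Nat \to \Trees{\Nat}$ with $g(0) = t$. Then $g^* \circ f = h$, so applying substitutionality (a consequence of reflexivity and compositionality) to the hypothesis $\Ias{f^*(a)}{f^*(b)}$ yields $\Ias{h^*(a)}{h^*(b)}$, i.e., $h^*(a) \ \IU \ h^*(b)$. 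The main obstacle is managing the mismatch between the two restricted substitution regimes: Boolean modalities only probe with closed trees $t \in \Trees{\Zero}$, whereas single-valuedness uses substitutions into $\{\bot, \leaf{0}\} \subseteq \Trees{\Nat}$ which still contain a free variable. The key observation is that a single substitution $f: \Nat \to \{\bot, \leaf{0}\}$ post-composed with an arbitrary $g: \Nat \to \Trees{\Zero}$ encodes exactly the family of $h: \Nat \to \{\bot, t\}$ arising in Boolean tests, and this is what makes the combination of substitutionality with base-valuedness powerful enough to establish the equivalence in both directions.
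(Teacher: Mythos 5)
Your proposal is correct and follows essentially the same route as the paper: both reduce $\basrel$ (via the open-predicate characterisation of Lemma~\ref{lem:bool_char}) to the condition $\forall t, h:\Nat\to\{\bot,t\}.\ h^*(a)\ \IU\ h^*(b)$, and then bridge to single-valuedness through the factorisation $h = g^*\circ f$ with $f:\Nat\to\{\bot,\leaf{0}\}$ and $g(0)=t$, using base-valuedness in one direction and substitutionality in the other. The only cosmetic difference is that you argue the two implications separately (and invoke Proposition~\ref{pro:complement} for the inclusion $\Ias{}{}\subseteq\ \basrel$, where substitutionality alone would suffice), whereas the paper packages the same steps as two chains of equivalences.
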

	\begin{proof}
		Let $\alpha$ be the EM-algebra $\alpha_{\IEQ}$ and $\Obser = \Obser_{\IEQ}$.
		By Lemma~\ref{lem:bool_char} we derive that:
		\begin{align*}
		a \ \sqsubseteq_{\Obser} \ b \quad \iff & \quad \forall v \in \Arb, \forall f : \Nat \to \{\bigvee \emptyset, v\}. \ \alpha(\Trees{f}(a)) \fimp \alpha(\Trees{f}(b))\\ 
		\quad \iff & \quad \forall f : \Nat \to \{\bot,\leaf{0}\}, \forall v \in \Arb, \forall g : \Nat \to \{\bigvee \emptyset,v\}. \ \alpha(\Trees{g}(f^*(a))) \fimp \alpha(\Trees{g}(f^*(b))) \\
		\quad \iff & \quad \forall f : \Nat \to \{\bot,\leaf{0}\}. \ f^*(a) \ \sqsubseteq_{\Obser} f^*(b) \ .
		\end{align*}
		So if $\sqsubseteq_{\Obser} = \Ias{}{}$, then $\IEQ$ is single-valued.
		Assume that $\IEQ$ is single-valued.
		Using that $\IEQ$ is also base-valued, we do the following derivation.
		\begin{align*}
		a \ \sqsubseteq_{\Obser} \ b & \quad \iff \quad \forall P : \Nat \to \Bool, o \in \Obser_{\IEQ}. \ o(P)(a) \Rrightarrow o(P)(b)\\
		\quad \iff & \quad \forall P : \Nat \to \Bool, \forall t \in \Trees{\Zero}, \forall K : \Trees{\Zero} \to \Bool \ \text{open w.r.t.} \ \IEQ. \ (K \circ \mu_{\Zero} \circ \widehat{t})(\Trees{P}(a)) \Rrightarrow (K \circ \mu_{\Zero} \circ \widehat{t})(\Trees{P}(b))\\
		\quad \iff & \quad \forall P : \Nat \to \Bool, \forall t \in \Trees{\Zero}. \ \Ias{(\mu_{\Zero} \circ \widehat{t})(\Trees{P}(a))}{(\mu_{\Zero} \circ \widehat{t})(\Trees{P}(b))}\\
		\quad \iff & \quad \forall f : \Nat \to \{\bot,\leaf{0}\}, \forall g: \Nat \to \Trees{\Zero}. \ \Ias{g^*(f^*(a))}{g^*(f^*(b))}\\
		\quad \iff & \quad  \forall f : \Nat \to \{\bot,\leaf{0}\}. \ \Ias{f^*(a)}{f^*(b)}\\
		\iff & \quad \Ias{a}{b} \enspace .
		\end{align*}
	\end{proof}

	The examples of global store and probability are single-valued. However, not all equational theories are single-valued. Consider for instance the combination of nondeterminism with probability. 
	
	\begin{exa}[Nondeterminism + Probability]
	We look at the combination of effects given in Examples~\ref{exa:comb1} and \ref{exa:comb}, and let $\IEQ$ be the smallest complete single-valued relation containing the axioms from that example.
	We derive that $\IEQ$ contains an undesirable algebraic equation.
	\begin{lem}
		Let $\IEQ$ be as above, then $\IEQ$ contains the algebraic equation:
		$$ \EPor(\ENor(x, y), \ENor(x, z)) \ = \ \ENor(x, \EPor(y, z)) $$
	\end{lem}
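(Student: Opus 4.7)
The approach is to exploit single-valuedness of $\IEQ$: since $\IEQ$ is by definition the smallest complete single-valued algebraic relation containing the axioms from Example~\ref{exa:comb1}, to prove that the stated equation holds in $\IEQ$ (in both directions) it suffices to verify that for every substitution $f : \Nat \to \{\bot, \leaf{0}\}$ the two substituted trees $f^*(\EPor(\ENor(x,y), \ENor(x,z)))$ and $f^*(\ENor(x, \EPor(y,z)))$ are $\IEQ$-related in both directions. Only the three variables $x, y, z$ appear, so this reduces to the $2^3 = 8$ substitutions determined by the triple $(X, Y, Z) := (f(x), f(y), f(z)) \in \{\bot, \leaf{0}\}^3$.

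The key observation is that since only two distinct constants occur, in every one of the eight cases at least two of $X, Y, Z$ must coincide, so an idempotency axiom always applies and the substituted equation can be derived purely from the axioms. I would organise the case analysis into three patterns. Whenever $Y = Z$, a single use of $\EPor$-idempotency collapses the left-hand side to $\ENor(X, Y)$, while the same axiom collapses the right-hand side to $\ENor(X, Y)$. Whenever $X = Y$, I would rewrite the left-hand side using $\ENor$-idempotency to $\EPor(X, \ENor(X, Z))$, then apply the interaction law $\EPor(u, \ENor(v, w)) = \ENor(\EPor(u, v), \EPor(u, w))$ to obtain $\ENor(\EPor(X, X), \EPor(X, Z))$, and finally use $\EPor$-idempotency to reach $\ENor(X, \EPor(Y, Z))$, matching the right-hand side. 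The case $X = Z$ reduces to the previous one by noting that both sides of the original equation are invariant under swapping $y$ and $z$ (via $\EPor$- and $\ENor$-symmetry). Together these three patterns cover all eight substitutions, and each derivation consists of equations, so each substituted pair is $\IEQ$-related in both directions by closure under transitivity and compositionality. Single-valuedness applied in each direction then yields the equation.

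The main difficulty is not computational --- each of the eight cases is a very short rewriting --- but rather making the enumeration transparently exhaustive; this is immediate once one notes that two distinct constants over three variables always force a coincidence. The conceptual point worth emphasising in the write-up is that single-valuedness is strictly stronger than the axiomatic theory: the displayed equation fails in general in the EM-algebra from Example~\ref{exa:comb} (for instance, $X = (1/2, 1/2)$, $Y = (0, 0)$, $Z = (1, 1)$ give left-hand value $(1/4, 3/4)$ and right-hand value $(1/2, 1/2)$, incomparable in either direction), so the equation is forced entirely by the single-valued closure and cannot be derived from the stated axioms using transitivity, compositionality, ordering and admissibility alone.
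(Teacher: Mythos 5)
Your proof is correct and takes essentially the same route as the paper: reduce via single-valuedness to the eight substitutions into $\{\bot,\leaf{0}\}$, observe that two of the three variables must coincide, and dispatch each case with idempotency, symmetry and the interaction law (the paper lists the concrete substitutions in a table rather than grouping by which pair coincides, but the derivations are the same). Your closing remark that the equation fails in the EM-algebra of Example~\ref{exa:comb} (e.g.\ $(1/4,3/4)$ versus $(1/2,1/2)$, incomparable) is a correct and worthwhile addition that sharpens the paper's point that the equation is genuinely forced by single-valuedness rather than by the axioms.
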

	\begin{proof}
		We use single-valuedness to prove the equation. We check all possible substitutions, leaving out two cases because of symmetry, and unifying the four cases where $y=z$ using arbitrary $a,b \in \{\bot,0\}$.
		
		\medskip
		\noindent
		\begin{tabular}{c c c | l c r}
		x & y & z & $\EPor(\ENor(x, y), \ENor(x, z))$ \hspace{-20mm} & & \hspace{-20mm} $\ENor(x, \EPor(y, z))$\\
		\hline
		$a$ & $b$ & $b$ & $\EsPor(\EsNor(a, b), \EsNor(a, b))$ & $= \EsNor(a,b) =$ & $\EsNor(a, \EsPor(b, b))$\\
		$\bot$ & $0$ & $\bot$ & $\EsPor(\EsNor(\bot, 0), \EsNor(\bot, \bot))$ \hspace{-3mm} & $= \EsPor(\bot,\EsNor(\bot,0)) = \EsNor(\EsPor(\bot,\bot),\EsPor(\bot,0)) =$  & $\EsNor(\bot, \EsPor(0, \bot))$\\
		$0$ & $0$ & $\bot$ & $\EsPor(\EsNor(0, 0), \EsNor(0, \bot))$ & $= \EsPor(0,\EsNor(0,\bot)) = \EsNor(\EsPor(0,0),\EsPor(0,\bot)) =$ & $\EsNor(0, \EsPor(0, \bot))$
		\end{tabular}
	\end{proof}
	\noindent
	Informally, the equation from above asserts that the scheduler controlling nondeterministic choice knows how future probabilistic choices will be resolved.
	Other undesirable equations can be derived from this.
	\end{exa}
	
	Another problematic example is the combination of nondeterminism and global store, which we shall not discuss here. As a last example, we look at exception catching.
	
	\begin{exa}[Exception Catching]
	We cannot represent exception catching from Example \ref{exa:exep} with Boolean modalities. 
	This is evident from the following lemma.
	\begin{lem}
		Assume $\IEQ$ is complete, single-valued, and contains the axiomatic equations from Example~\ref{exa:exep}, then $\Ias{}{} = (\Trees{\Nat})^2$.
	\end{lem}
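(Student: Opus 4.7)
The plan is to derive the equation $\leaf{0} \ \overline{\IEQ} \ \top$, from which substitutionality and transitivity immediately force $\IEQ = (\Trees{\Nat})^2$. The key observation is that single-valued substitutions $\Nat \to \{\bot, \leaf{0}\}$ can never replace $\leaf{0}$ with the operator $\EEx{e}$, while the substitutional closure of $\IEQ$ must allow every tree (including $\EEx{e}$) as a substitute for a variable. Single-valuedness therefore forces these two perspectives to coincide, collapsing $\top$ and $\EEx{e}$.

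The central step is to apply single-valuedness to the inequation $\Ias{\ECa{e}(\leaf{0}, \leaf{1})}{\leaf{0}}$. Ranging $f : \Nat \to \{\bot, \leaf{0}\}$ over the four possibilities for $(f(0), f(1))$, three sub-goals reduce immediately to the axioms $\ECa{e}(\bot, x) = \bot$ and $\ECa{e}(x, x) = x$; the only delicate case, $(f(0), f(1)) = (\leaf{0}, \bot)$, is handled by compositionality together with $\bot \leq_{\Trees{\Nat}} \leaf{0}$, which yields $\Ias{\ECa{e}(\leaf{0}, \bot)}{\ECa{e}(\leaf{0}, \leaf{0})}$, followed by idempotency to land on $\leaf{0}$. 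Single-valuedness then gives the inequation, and substitutionality with $\leaf{0} \mapsto \EEx{e}, \leaf{1} \mapsto \top$ produces $\Ias{\ECa{e}(\EEx{e}, \top)}{\EEx{e}}$; the axiom $\ECa{e}(\EEx{e}, x) = x$ reduces this to $\Ias{\top}{\EEx{e}}$, which combined with the order-derived $\Ias{\EEx{e}}{\top}$ yields $\EEx{e} \ \overline{\IEQ} \ \top$. Finally, compositionality lets me swap $\EEx{e}$ for $\top$ inside the axiom-instance $\ECa{e}(\EEx{e}, \leaf{0}) \ \overline{\IEQ} \ \leaf{0}$, so $\leaf{0} \ \overline{\IEQ} \ \ECa{e}(\top, \leaf{0}) \ \overline{\IEQ} \ \top$ by the absorbing axiom for $\top$, as desired.

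The main obstacle is discovering the pair $(\ECa{e}(\leaf{0}, \leaf{1}), \leaf{0})$ in the first place. Most naive candidates (e.g.\ $(\ECa{e}(\leaf{0}, \top), \leaf{0})$ or $(\leaf{0}, \ECa{e}(\leaf{0}, \leaf{1}))$) induce at least one base substitution whose verification is logically equivalent to the original equation, so single-valuedness collapses into a tautology that yields no new information. The chosen pair is engineered so that the potentially circular substitution lands $\bot$ in the handler slot, where monotonicity of $\ECa{e}$ in its second argument together with the idempotency axiom $\ECa{e}(x,x) = x$ discharges the residual goal without reference to the pair itself; the rest of the argument is then a short propagation through substitutionality and compositionality.
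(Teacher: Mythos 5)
Your proposal is correct and follows essentially the same route as the paper: both establish $\Ias{\ECa{e}(x,y)}{x}$ by checking the four substitutions into $\{\bot,\leaf{0}\}$ (handling the $(\leaf{0},\bot)$ case via $\bot \leq \leaf{0}$, monotonicity and idempotency), then instantiate $x \mapsto \EEx{e}$, $y \mapsto \top$ to force $\EEx{e} \ \overline{\IEQ} \ \top$, and finally use $\ECa{e}(\top,-) = \ECa{e}(\EEx{e},-)$ to collapse everything. The only cosmetic difference is that you route the last step through the concrete leaf $\leaf{0}$ before appealing to substitutionality, whereas the paper works directly with the variable $x$.
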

	\begin{proof}
		We first use single-valuedness to prove that $\ECa{e}(x,y) \leq x$:
		
		\smallskip
		\begin{tabular}{c c | c c c}
			$x$ & $y$ & $\ECa{e}(x,y)$ & & $x$ \\
			\hline
			$\bot$ & $\bot$ & $\ECa{e}(\bot,\bot)$ & $=$ & $\bot$ \\
			$0$ & $\bot$ & $\ECa{e}(0,\bot)$ & $\leq \ECa{e}(0,0) =$ & $0$ \\
			$\bot$ & $0$ & $\ECa{e}(\bot,0)$ & $=$ & $\bot$ \\
			$0$ & $0$ & $\ECa{e}(0,0)$ & $=$ & $0$
		\end{tabular}
		\smallskip
	
		\noindent
		If we substitute $\EEx{e}$ for $x$, and $\top$ for $y$, we get $\top = \ECa{e}(\EEx{e},\top) \leq \EEx{e} \leq \top$ and hence $\top = \EEx{e}$. 
		So $\top = \ECa{e}(\top,x) = \ECa{e}(\EEx{e},x) = x$, and we conclude that everything is equal to $\top$.
	\end{proof}
	If there are multiple exceptions, we could have avoided the use of $\top$ in the above proof. As such, the problem remains in a setting where we do not use a top element.
	\end{exa}

	\section{Conclusions}
	Comparing the theory presented in this paper with the established literature \cite{modal_journal,Quantitative}, a main difference is the addition of a top element $\top$ in the definition of effect trees $TX$, and the accommodated focus on $\Trees{\Zero}$ as an alternative to $\Trees{\{\ast\}}$. 
	If instead, we generated our EM-algebra from the latter structure, the resulting value space $\clos{\IU}$ would not be a complete lattice for some of the usual examples.
	A complete lattice is necessary for formulating program equivalence for higher-order functions, as quantitative formulas from \cite{Quantitative} need to be closed under suprema in general, and relators are formulated more easily using suprema (see Lemma \ref{lem:relator}).
	
	For instance, for global store the value constructed from the algebraic axioms is $\clos{\IU} = \mathcal{P}(\Nat)$, whereas 
	an alternative construction on $\Trees{\{\ast\}}$ yields as value space the set of partial functions from $\Nat$ to $\Nat$, which is not a complete lattice.
	
	One of the main motivations for this work was to more closely compare the two complementing views of specifying program equivalence: determining equality, and determining distinctness (inequality).
	The developed theory will hopefully give rise to practical proof methods for showing program equivalence and non-equivalence.
	Moreover, there might be some semi-decidability results in the sense of \cite{Escardo_decid}, or even decidability results in the absence of general recursion.
	
	Although this paper only studies equivalence between effectful expressions at base types $\Nat$ and $\Zero$, it is done so with the expectation that we can extend it to a description of program equivalence on functional languages with higher-order types.
	On the one hand, the EM-algebra is used as a basis to formulate a quantitative logic of behavioural properties in \cite{Quantitative}, which are used to describe differences between inequivalent functional programs.
	On the other hand, applicative bisimilarity is used in order to prove equivalence between higher-order programs, using the relator formulated in Subsection~\ref{sub:relator} in the definition of effectful applicative bisimilarity from \cite{Relational}.
	Further up-to techniques could potentially be utilised to prove equivalence of terms even more easily \cite{Sangiorgi_book, Sangiorgi_book_2, effectful_normal}.
	
	Last but not least, it would make sense to use EM-algebras for formulating quantitative relations between programs, e.g. \emph{metrics} \cite{ARNOLD1980,Escardo_metric,MardarePP16,appdis}.
	This is a potential subject for future research.

	\bibliographystyle{eptcs}
	\bibliography{biblio}
	
\end{document}